\documentclass[noinfoline]{imsart}
\RequirePackage[numbers]{natbib}

\usepackage[utf8]{inputenc}
\usepackage{amsmath}
\usepackage{mathrsfs}

\usepackage{mathabx}

\usepackage{amsthm}

\usepackage{amsfonts}
\usepackage{amssymb}
\usepackage{graphicx}
\usepackage{placeins}
\usepackage{algorithm2e}
\usepackage{caption}
\usepackage{array}
\usepackage{multicol}
\usepackage{color}
\usepackage{url}
\usepackage{bbm}
\usepackage[font={small,it}]{caption}
\usepackage{stackrel}
\usepackage{enumerate}
\usepackage{datetime}
\usepackage{mhequ}
\usepackage{xfrac}

\startlocaldefs
\def \be{\begin{equs}}
\def \ee{\end{equs}}

\DeclareMathOperator{\sech}{sech}

\DeclareMathOperator{\rank}{rank}

\theoremstyle{plain}

\newtheorem{theorem}{Theorem}[section]

\newtheorem{assumption}[theorem]{Assumption}

\newtheorem{lemma}{Lemma}[section]

\theoremstyle{definition}
\newtheorem{remark}{Remark}[section]

\renewcommand{\P}{\mathcal P}
\newcommand{\X}{\mathbf X}
\newcommand{\R}{\mathbf R}
\newcommand{\D}{\mathrm{d}}
\newcommand{\E}{\mathbf E}
\DeclareMathOperator{\cov}{cov}
\DeclareMathOperator{\DD}{D}


\newcommand{\mc}[1]{\mathcal{#1}}

\newcommand{\Gam}[2]{\textnormal{Gamma}\left( #1,#2\right)}

\newcommand{\tv}[2]{\left| \left| #1 - #2 \right|\right|_{\textnormal{TV}}}
\newcommand{\gp}{\textnormal{GP}}

\newcommand{\norm}[1]{\left\| #1 \right\|}

\newcommand{\tSigma}{\Sigma_{\epsilon}}

\newcommand{\Bern}[1]{\textnormal{Bernoulli}\left(#1 \right)}

\newcommand{\No}[2]{\textnormal{Normal}\left(#1,#2 \right)}
\newcommand{\PG}[2]{\textnormal{PG}\left(#1,#2 \right)}

\newcommand{\KL}[2]{\textnormal{KL}\left( #1 ~||~ #2 \right)}
\newcommand{\tr}[1]{\textnormal{tr}\left( #1 \right)}
\newcommand{\bigO}[1]{\mc O\left( #1\right)}

\newcommand{\Log}[1]{\log \left( #1 \right)}

\newcommand{\Pp}{\mathbf P}

\DeclareMathOperator{\Dsc}{Disc}
\DeclareMathOperator{\Binom}{Binomial}
\newcommand{\Disc}[2]{\Dsc \left(#1, #2\right)}
\newcommand{\Kr}[2]{\mc P\left(#1,#2 \right)}
\newcommand{\Kre}[2]{\mc P_{\epsilon}\left(#1,#2 \right)}

\newcommand{\Krth}{\Kr{\theta}{\cdot}}
\newcommand{\Krthe}{\Kre{\theta}{\cdot}}

\newcommand{\Ueps}{U_{\epsilon}}
\newcommand{\Lameps}{\Lambda_{\epsilon}}
\newcommand{\Psieps}{\Psi_{\epsilon}}

\endlocaldefs
\graphicspath{{./}}

\makeatletter
\def\input@path{{./}}
\makeatother

\begin{document}
\allowdisplaybreaks
\begin{frontmatter}

\title{Optimal approximating Markov chains for Bayesian inference}
\runtitle{Optimal Approximate MCMC}
\date{\today}

\begin{aug}
\author{\fnms{James E.} 
\snm{Johndrow}\thanksref{t2}\ead[label=e1]{jj@stat.duke.edu}},
\author{\fnms{Jonathan C.} 
\snm{Mattingly}\thanksref{}\ead[label=e2]{jonm@math.duke.edu}},
\author{\fnms{Sayan} 
\snm{Mukherjee}\thanksref{}\ead[label=e3]{sayan@stat.duke.edu}},
\and
\author{\fnms{David B.} \snm{Dunson}\thanksref{t2} 
\ead[label=e4]{dunson@duke.edu}
\ead[label=u1,url]{http://www.isds.duke.edu/\textasciitilde dunson/}}

\thankstext{t2}{This research was partially supported by the National Science 
Foundation grant number 1546130.}
\runauthor{J. Johndrow et al.}

\affiliation{Duke University\thanksmark{m1}}

\address{114 Old Chemistry Building\\
Duke University\\ 
Durham, NC 27708 \\
\printead{e1}\\
\printead{e2}\\
\printead{e3}\\
\printead{e4}\\
\printead{u1}} 

\end{aug}

\begin{abstract}
The Markov Chain Monte Carlo method is the dominant paradigm for posterior 
computation in Bayesian analysis. It is common to control computation time by 
making approximations to the Markov transition kernel. Comparatively little 
attention has been paid to computational optimality in these approximating 
Markov Chains, or when such approximations are justified relative to obtaining 
shorter paths from the exact kernel. We give simple, sharp bounds for uniform 
approximations of uniformly mixing Markov chains. We then suggest a notion of 
optimality that incorporates computation time and approximation error, and use 
our bounds to make generalizations about properties of good approximations in 
the uniformly mixing setting. The relevance of these properties is demonstrated 
in applications to a minibatching-based approximate MCMC algorithm for 
large $n$ logistic regression and low-rank approximations for Gaussian 
processes. 
\end{abstract}

\begin{keyword}
\kwd{approximate MCMC; Bayesian; big data; computation; Markov chain Monte 
Carlo; minibatch; perturbation theory}
\end{keyword}

\end{frontmatter}

\section{Introduction} \label{sec:intro}
The fundamental entity in Bayesian statistics is the posterior distribution
\begin{align}
 \mu(x \mid w) = \frac{p(w \mid x) p(x)}{\int_{x} p(w \mid x) p(x)}, \label{eq:post}
\end{align}
the conditional distribution of the model parameters $x$ given the data 
$w$.\footnote{We will abuse notation to allow $\mu$ to represent both a measure 
and a density.} The integral in the denominator of (\ref{eq:post}) is typically 
not available in closed form. A common approach constructs an ergodic Markov 
chain with invariant distribution $\mu$, and then collects finite sample paths 
$X_1,\ldots,X_t$ from the chain. Statistical inference then relies on 
Ces\'{a}ro averages $n^{-1} \sum_{k=0}^{n-1} f(X_k)$ for appropriate classes of 
$\mu$-measurable functions $f$, and other pathwise quantities. This is referred 
to as Markov Chain Monte Carlo \cite{robert2004monte,gamerman2006markov} or 
MCMC. 

We consider Markov chains that result from approximating the transition kernel 
$\P$ by another kernel $\P_\epsilon$ satisfying $\sup_{x \in \X} \| 
\P(x,\cdot)-\P_\epsilon(x,\cdot) \|_{TV} < \epsilon$, where $\X$ is the state 
space. We refer to this as ``approximate MCMC.'' The use of approximate kernels 
-- often without showing such an error bound -- is common practice in Bayesian 
analysis, and is usually computationally motivated, i.e. obtaining samples from 
$\P_\epsilon$ requires less computation than sampling from $\P$. Our focus is on 
providing simple error bounds for $\P_\epsilon$ when $\P$ satisfies a uniform 
mixing condition, and constructing a notion of computational optimality that 
allows analysis of the optimal $\epsilon$ for a fixed computational budget 
under these ``nice'' convergence conditions. 

While being arguably the dominant algorithm for Bayesian inference, MCMC is 
computationally demanding when either $p$ (the dimension of $\X$) or $N$ (the 
number of observations) is large. To more easily apply MCMC in these cases, it 
is common to approximate $\P$ with a kernel that is simpler or faster to sample 
from.  One example is inference for Gaussian process models, bypassing $O(N^3)$ 
matrix inversion through approximations 
\cite{banerjee2008gaussian,banerjee2012efficient,hughes2013dimension}. Some 
numerical results showing that accurate approximations can be achieved with 
low-rank approximations of the covariance matrix when the state space is 
discrete are given in \citet{johndrow2017coupling}. Another prevalent example is 
the use of Laplace or Gaussian approximations to conditional distributions. 
\citet{guhaniyogi2014bayesian} proposes an algorithm that replaces some sampling 
steps with point estimates. \citet{korattikara2014austerity} approximate 
Metropolis-Hastings acceptance decisions using subsets of the data.  It is also 
common to approximate intractable full conditionals by simpler distributions, 
with \citet{bhattacharya2010nonparametric} using a beta approximation, 
\citet{o2004bayesian} replacing the logistic with a $t$ distribution, and 
\citet{ritter92} discretizing.

A significant literature exists giving error bounds for approximations 
$\P_\epsilon$ via perturbation theory. This includes 
\cite{roberts1998convergence} and \cite{mitrophanov2005sensitivity}, which both 
give results for uniformly ergodic $\P$ similar to those given here. 
\citet{johndrow2017coupling} revisits this setting via coupling and decoupling. 
\citet{pillai2014ergodicity} and \citet{rudolf2015perturbation} give results 
primarily in the Wasserstein metric for $\P$ satisfying a Wasserstein ergodicity 
and Foster-Lyapunov condition. Other examples of perturbation theory applied to 
MCMC include \cite{alquier2014noisy, ferre2013regular, bardenet2014towards}. Our 
goal is to use results of this sort to study a notion of optimality 
incorporating statistical and computational factors. This gives insight into 
optimal algorithms under limited computational budgets. Our main contributions 
are to define a notion of optimality incorporating both statistical and computational
properties and apply it to the study of aMCMC algorithms. We demonstrate the 
relevance of this notion through an application to minibatching MCMC for 
logisitic regression and Metropolis-Hastings algorithms for Gaussian process 
models.

\section{Basic error bounds} \label{sec:ergodic}
This section provides some simple error bounds for approximate MCMC. We use 
conditions similar to \citet{johndrow2017coupling} but take a different approach 
to showing the relevant inequalities, arriving at different bounds in some 
cases. The proofs are very direct and elementary.

\subsection{Preliminaries}
Consider an increasing sequence $\mathcal F_t$ of $\sigma$-algebras over a 
probability space, and $\mathcal F_t$-measurable Markov processes $X_t, 
X_t^\epsilon$ on a Polish state space $\X$ evolving according to transition 
kernels $\P, \P_\epsilon$ respectively with $\P, \P_\epsilon : \X \times \X \to 
\R_+$, the positive half-line. In many applications, $\X = \R^p$. We assume $\P$ 
has invariant measure $\mu$, and are interested in bounds on
\be
\D_{TV}(\mu,\widehat \mu_n^\epsilon) = \left\| \mu - n^{-1} \sum_{k=0}^{n-1} 
\nu \P^k_\epsilon \right\|_{TV},\quad \Delta(\mu f, \widehat f_n^\epsilon  ) = 
\E \left( \mu f - \frac1n \sum_{k=0}^{n-1} f(X_k^\epsilon) \right)^2 ,
\ee
where the expectation is taken with respect to the law of 
$X_0^\epsilon,X_1^\epsilon,\ldots,X_{n-1}^\epsilon$. We use two basic 
conditions. The first is a form of Doeblin's condition.
\begin{assumption}[Doeblin] \label{ass:Doeblin}
 There exists a constant $0 < \alpha < 1$ such that
 \be
 \| \P(x,\cdot) - \P(y,\cdot) \|_{TV} < 1-\alpha
 \ee
 for every $(x,y) \in \X \times \X$. 
\end{assumption}
This condition guarantees that $\P$ has a unique invariant measure that it 
converges to exponentially. We also assume that the approximation $\P_\epsilon$ 
is uniform in total variation.
\begin{assumption}[Approximation error] \label{ass:UniformTV}
 There exists $0 < \epsilon < \alpha$ such that
 \be
 \| \P(x,\cdot) - \P_\epsilon(x,\cdot) \|_{TV} < \epsilon
 \ee
 for all $x \in \X$. 
\end{assumption}
If $\epsilon < \alpha/2$, these two conditions together guarantee that 
$\P_\epsilon$ also has a unique invariant measure that it converges to 
exponentially.
\begin{remark} \label{rem:Aepsilon}
 Assume that Assumptions \ref{ass:Doeblin} and \ref{ass:UniformTV} hold, and 
the constant $\alpha$ in Assumption \ref{ass:Doeblin} cannot be globally 
improved. Then if $\epsilon < \frac{\alpha}{2}$
 \be
 \| \P_\epsilon(x,\cdot) - \P_\epsilon(y,\cdot) \|_{TV} \le 1-\alpha_\epsilon < 1
 \ee
 where $\alpha_\epsilon \in (\alpha-2\epsilon, \alpha+2 \epsilon)$. 
\end{remark}
\begin{proof}
Observe that
\be
\| \P_\epsilon(x,\cdot) - \P_\epsilon(y,\cdot) \|_{TV} &\le \|\P_\epsilon(x,\cdot) - \P(x,\cdot) \|_{TV} + \|\P(x,\cdot)-\P(y,\cdot)\|_{TV} \\
&+  \|\P(y,\cdot) - \P_\epsilon(y,\cdot)\|_{TV} \\
&\le \epsilon + 1-\alpha + \epsilon = 1-\alpha+2\epsilon = 1-\alpha_\epsilon < 1.
\ee
This establishes the existence of some $\alpha_\epsilon > \alpha-2\epsilon>0$. 
Now to obtain an upper bound, notice that
\be
\| \P(x,\cdot) - \P(y,\cdot) \|_{TV} &\le \|\P(x,\cdot) - \P_\epsilon(x,\cdot) 
\|_{TV} + \|\P_\epsilon(x,\cdot)-\P_\epsilon(y,\cdot)\|_{TV} \\
&+  \|\P(y,\cdot) - \P_\epsilon(y,\cdot)\|_{TV} \\ 
&\le 2\epsilon + 1-\alpha_\epsilon 
\ee
so since $\alpha$ cannot be globally improved, we must have
\be
1-\alpha_\epsilon+2\epsilon &\ge 1-\alpha \\
\alpha_\epsilon &\le \alpha+2\epsilon
\ee
else there would exist a ``better'' constant than $\alpha$ in Assumption 
\ref{ass:Doeblin}. This gives the result.
\end{proof}

Finally, we obtain a simple bound on the closeness of the invariant measures.
\begin{theorem} \label{thm:Bias}
 Suppose Assumptions \ref{ass:Doeblin} and \ref{ass:UniformTV} are satisfied 
and $\epsilon < \frac{\alpha}{2}$. Then 
 \be
 \| \mu - \mu_\epsilon \|_{TV} \le \frac{\epsilon}{\alpha},
 \ee
 where $\mu_\epsilon$ is the invariant measure of $\P_\epsilon$. 
\end{theorem}
\begin{proof}
 Observe that
 \be
 \| \mu - \mu_\epsilon \|_{TV} \le \| \mu \P - \mu_\epsilon \P \|_{TV} + 
\|\mu_\epsilon \P - \mu_\epsilon \P_\epsilon \|_{TV} \le (1-\alpha) \| \mu - 
\mu_\epsilon \|_{TV} + \epsilon.
 \ee
 rearrangement produces the result.
\end{proof}

These conditions suffice to obtain bounds on $\D_{TV}$ and $\Delta$.

\subsection{Main Results}

\begin{theorem} \label {thm:approxerr}
 Suppose $\P$ satisfies Assumption \ref{ass:Doeblin}, $\P_{\epsilon}$ satisfies 
\ref{ass:UniformTV}. Let $X_0 \sim \nu$ for any probability measure $\nu$ on 
$(\X, \mathcal F_0)$. Then
 \be
  \left\| \mu - \frac1n \sum_{k=0}^{n-1} \nu \P^k_\epsilon \right\|_{TV} &\le 
\frac{(1-(1-\alpha)^n) \| \mu- \nu \|_{TV}}{n \alpha} - \frac{\epsilon 
(1-(1-\alpha)^n)}{n \alpha^2} + \frac{\epsilon}{\alpha}. 
\label{eq:tvboundapprox}
 \ee
\end{theorem}
\begin{proof}
By induction
\be
\nu  \P_\epsilon^n - \nu^*  \P^n = (\nu - \nu^*)  \P^n + \sum_{k=0}^{n-1} \nu  
\P_\epsilon^k ( \P_\epsilon -  \P)  \P^{n-k-1}.
\ee
We have that
\be
\| \nu  \P_\epsilon^k  \P_\epsilon - \nu  \P_\epsilon^k  \P \|_{TV} \le 
\epsilon
\ee
so
\be
\| \nu  \P_{\epsilon}^k  \P_\epsilon  \P^{n-k-1} - \nu  \P_{\epsilon}^k  \P  
\P^{n-k-1} \|_{TV} \le \epsilon (1-\alpha)^{n-k-1}.
\ee
Hence, by the triangle inequality
\be
\| \nu  \P_{\epsilon}^n - \nu^*  \P^n  \|_{TV} &\le \| \nu^*  \P^n - \nu  
\P^n\|_{TV} + \sum_{k=1}^{n-1} \| \nu  \P_{\epsilon}^k  \P_{\epsilon}  
\P^{n-k-1} - \nu  \P_{\epsilon}^k  \P  \P^{n-k-1} \|_{TV} \\
&\le (1-\alpha)^n \| \nu^* - \nu \|_{TV} + \epsilon \sum_{k=0}^{n-1} 
(1-\alpha)^{n-k-1} \\
&\le (1-\alpha)^n \| \nu^* - \nu \|_{TV} + \epsilon 
\frac{1-(1-\alpha)^n}{\alpha} \label{eq:PPeps}
\ee
Taking $\nu^* = \mu$ we have
\be
\left\| \mu - n^{-1} \sum_{k=0}^{n-1} \nu  \P^k_{\epsilon} \right\|_{TV} \le  
\frac{(1-(1-\alpha)^n) \| \nu - \mu \|_{TV}}{n \alpha} + \frac{\epsilon}{\alpha} 
+ \frac{\epsilon ((1-\alpha)^n-1)}{n \alpha^2}.
\ee
Completing the proof. A similar argument in the Wasserstein norm can be found 
in \cite{rudolf2015perturbation}.

\end{proof}
From \citet[\S 6]{johndrow2017coupling} this bound is sharp. We obtain bounds 
on $\Delta$ at stationarity directly when $\epsilon < \alpha/2$ via the 
existence of a Doeblin condition for $\P_\epsilon$ in this case. The result 
relies on the following simple lemma.

\begin{lemma} \label{lem:CovBound}
 Suppose $\P$ satisfies assumption \ref{ass:Doeblin}. Let $f$ and $g$ be 
bounded functions. Then with $X_0 \sim \mu$
\begin{align*}
\cov(f(X_j),g(X_k)) \le 2 (1-\alpha)^{|j-k|} |f|_* |g|_*,
\end{align*}
where $\|f\|_* = \inf_{c \in \R} \|f-c\|_{\infty}$. 
\end{lemma}

\begin{proof}
First suppose $\mu f = 0, \mu g = 0$ and $|f|_{\infty}, |g|_{\infty} < 1$. 
Without loss of generality, take $k \ge j$. Then for $X_0 \sim \mu$
\be
\cov(f(X_j),g(X_k)) &= \E[ \E[ f(X_j) g(X_k) \mid \mathcal F_j]] - \E[f(X_j)] 
\E[g(X_k)] \\
&= \E[ f(X_j) \E[ g(X_k) \mid \mathcal F_j]] \\
&\le \E[ f(X_j) (\P^{k-j} g)(X_j)  ] \le 2 (1-\alpha)^{k-j},
\ee
since $\sup_{|f| < 1} \E[ f(X_j) - \mu f \mid X_0=x ] = \sup_{|f| < 1} \E[ 
f(X_j) \mid X_0=x ] \le 2 (1-\alpha)^j$ and $\mu f = 0$. 

Now, since 
\be
 \cov(f(X_j),g(X_k)) = \cov(f(X_j)-c_1,g(X_k)-c_2)
\ee
for any $c_1, c_2 \in \R$, 
\be
\sup_{f,g : |f|<1,|g|<1} \cov(f(X_j),g(X_k)) &\le 2 (1-\alpha)^{k-j} 
\ee
Finally, since $\cov(f(X_j),g(X_k)) = \cov(g(X_k),f(X_j))$, we obtain
\be
\cov(f(X_j),g(X_k)) &\le 2 (1-\alpha)^{|j-k|} |f|_* |g|_* 
\ee
for any bounded $f,g$. 
\end{proof}

This gives a result for $\Delta$ starting from stationarity.

\begin{theorem} \label{thm:ExpStationary}
 Suppose $\P$ satisfies Assumption \ref{ass:Doeblin} and $\P_\epsilon$  
satisfies assumption \ref{ass:UniformTV}, $|f|_* < \infty$, and $X_0 \sim 
\mu_\epsilon$. Then  
 \be \label{eq:ExpectationStationary}
 \frac{1}{|f|_*^2} \E \left( \mu f - \frac1n \sum_{k=0}^{n-1} f(X_k^\epsilon) 
\right)^2 \le \frac{4 \epsilon^2}{\alpha^2} + 2 S(n,\alpha_\epsilon)
 \ee
 where
 \be
 S(n,\alpha_\epsilon) = \left(\frac{2}{\alpha_\epsilon n} + 
\frac{2}{\alpha_\epsilon n^2} + 
\frac{2(1-\alpha_\epsilon)^{n+1}}{\alpha_\epsilon^2 n^2} - \frac1n - 
\frac{2}{\alpha_\epsilon^2 n^2} \right).
 \ee
\end{theorem}

\begin{proof}
\be
\E  & \left( \mu f - \frac1n \sum_{k=0}^{n-1} \nu \P^k_{\epsilon} f  + \frac1n 
\sum_{k=0}^{n-1} \nu \P^k_{\epsilon} f - \frac1n \sum_{k=0}^{n-1} 
f(X_k^\epsilon) \right)^2 \\
= &  \left( \mu f - \frac1n \sum_{k=0}^{n-1} \nu \P^k_{\epsilon} f \right)^2  
+  \E \left( \frac1n \sum_{k=0}^{n-1} f(X_k^\epsilon) - \nu \P^k_\epsilon f 
\right)^2 \\
\le &  4|f|_*^2 \left( \frac{(1-(1-\alpha)^n) \| \mu- \mu_\epsilon \|_{TV}}{n 
\alpha}  - \frac{\epsilon (1-(1-\alpha)^n)}{n \alpha^2} + 
\frac{\epsilon}{\alpha} \right)^2 \\
+ & \frac{1}{n^2} \sum_{j=0}^{n-1} \sum_{k=0}^{n-1} \cov(f(X_j),f(X_k)) \\
\le & 4|f|_*^2 \left( \frac{(1-(1-\alpha)^n) \epsilon}{n \alpha^2}  - 
\frac{\epsilon (1-(1-\alpha)^n)}{n \alpha^2} + \frac{\epsilon}{\alpha} \right)^2 
+ \frac{2 |f|_*^2}{n^2} \sum_{j=0}^{n-1} \sum_{k=0}^{n-1} 
(1-\alpha_\epsilon)^{|j-k|} \\
\le & \frac{4 \epsilon^2 |f|_*^2}{\alpha^2} + 2 |f|_*^2 
\left(\frac{2}{\alpha_\epsilon n} + \frac{2}{\alpha_\epsilon n^2} + 
\frac{2(1-\alpha_\epsilon)^{n+1}}{\alpha_\epsilon^2 n^2} - \frac1n - 
\frac{2}{\alpha_\epsilon^2 n^2} \right) \\
\le & \frac{4 \epsilon^2 |f|_*^2}{\alpha^2} + 2 |f|_*^2 S(n,\alpha_\epsilon)
\ee
\end{proof}

It is also possible to obtain results for general starting measures.
\begin{theorem}
Suppose $\mc P$ satisfies Assumption \ref{ass:Doeblin}, $\mc P_{\epsilon}$ 
satisfies Assumption \ref{ass:UniformTV} with $\epsilon < \frac{\alpha}{2}$. Let 
$X_0 \sim \nu$ with $\nu \ll \mu_\epsilon$. Then for bounded $f$
\be
 \frac{1}{|f|_*^2} \Delta(\mu f, \widehat f_n^{\epsilon} ) &\le \left( \frac{2 
\epsilon}{\alpha} \right)^2 + 2 S( n, \alpha_{\epsilon}) + 4 
\frac{(1-(1-\alpha_\epsilon)^n)^2}{n^2 \alpha_\epsilon^2}.
\ee 
\end{theorem}
\begin{proof}
Suppose $\mu f = 0, \mu g = 0$ and $|f|_{\infty}, |g|_{\infty} < 1$. Without 
loss of generality, take $k \ge j$. Then for $X_0 \sim \nu$
\be
\cov(f(X_j),g(X_k)) &= \E[ \E[ f(X_j) g(X_k) \mid \mathcal F_j]] - \E[f(X_j)] 
\E[g(X_k)] \\
&= \E[ f(X_j) \E[ g(X_k) \mid \mathcal F_j]] - \E[f(X_j)] \E[g(X_k)] \\
&\le \E[ f(X_j) (\P^{k-j} g)(X_j)  ] \le 2 (1-\alpha)^{k-j} + 4 
(1-\alpha)^{k+j},
\ee
As before, $\cov(f(X_j)-c_1,g(X_k)-c_2) = \cov(f(X_j),g(X_k))$, so we have for 
general, $\mu$-measurable, bounded functions
\be
\sum_{j=0}^{n-1} \sum_{k=0}^{n-1} \cov(f(X_j),f(X_k)) &\le 2 |f|_*^2 n^2 
S(n,\alpha) + 4 |f|_*^2 \sum_{j=0}^{n-1} \sum_{k=0}^{n-1} (1-\alpha)^{j+k} \\
&\le 2 |f|_*^2 n^2 S(n,\alpha) + 4 |f|_*^2 \frac{(1-(1-\alpha)^n)^2}{\alpha^2} 
\\
\frac{1}{n^2 |f|_*^2} \sum_{j=0}^{n-1} \sum_{k=0}^{n-1} \cov(f(X_j),f(X_k)) &\le 
2 S(n,\alpha) + 4 \frac{(1-(1-\alpha)^n)^2}{n^2 \alpha^2}
\ee
Now, take $\mu_\epsilon f = 0$, then apply the Doeblin condition for 
$\P_\epsilon$, giving
\be
\frac{1}{n^2 |f|_*^2} \sum_{j=0}^{n-1} \sum_{k=0}^{n-1} 
\cov(f(X^\epsilon_j),f(X^\epsilon_k)) &\le 2 S(n,\alpha_\epsilon) + 4 
\frac{(1-(1-\alpha_\epsilon)^n)^2}{n^2 \alpha_\epsilon^2}.
\ee
Applying the triangle inequality and the bias estimate gives the result.
\end{proof}

\subsection{Sharpness}
Now we evaluate the sharpness of the bound in Theorem \ref{thm:ExpStationary} 
by studying a Markov chain and perturbation satisfying the assumptions. Let
\be
 \P = \begin{pmatrix} 1-\beta & \beta \\ \beta & 1-\beta \end{pmatrix}
\ee
for $\beta \le 1/2$. It is easy to verify by direct calculation that the 
invariant measure is $\mu = \begin{pmatrix} \frac12 & \frac12 \end{pmatrix}$ and 
$\P$ satisfies the Doeblin condition with $\alpha = 2\beta$. Any possible 
starting measure $\nu$ can be expressed as $\nu_{\gamma} = (\gamma,1-\gamma)$ 
for some $\gamma \le 1/2$. Then $\|\nu_{\gamma} - \mu\|_{TV} = \frac{1}{2}\left( 
|1/2-\gamma| + |1/2-(1-\gamma)| \right) = \frac{1}{2}-\gamma$ when $\gamma < 
1/2$; note if $\gamma > 1/2$, we get $\gamma - 1/2$. 

Consider the perturbation
\be \label{eq:Peps}
 \P_{\epsilon} = \begin{pmatrix} 1-(\beta-\epsilon) & \beta-\epsilon \\ 
\beta+\epsilon & 1-(\beta+\epsilon) \end{pmatrix},
\ee
which satisfies $\sup_{x \in \mathcal X} \| \P_{\epsilon}(x,\cdot)-  
\P(x,\cdot)\|_{TV} = \epsilon$ and $\sup_{(x,y) \in \X\times \X} \| 
\P_\epsilon(x,\cdot) - \P_\epsilon(y,\cdot) \|_{TV} < 1-2\beta = 1-\alpha$. The 
invariant measure is $\begin{pmatrix} \frac{\beta+\epsilon}{2\beta} & 
\frac{\beta-\epsilon}{2\beta} \end{pmatrix}$, so $\| \mu - \mu_\epsilon \|_{TV} 
= \frac{\epsilon}{\alpha}$. 

A diagonalization of $\P_\epsilon$ is
\be
D_\epsilon &= Q_\epsilon^{-1} \P_\epsilon Q_\epsilon \\
D_\epsilon &= \frac{1}{2a} \begin{pmatrix} \beta+\epsilon & \beta-\epsilon \\ -1 
& 1 \end{pmatrix} \begin{pmatrix} 1 & 0 \\ 0 & 1-2\beta \end{pmatrix}  
\begin{pmatrix} 1 & -(\beta-\epsilon) \\ 1 & \beta+\epsilon \end{pmatrix}
\ee
(see Lawler pg 15-16), so that
\be \label{eq:Pkeps}
\P^k_\epsilon &= \frac{1}{2\beta} \begin{pmatrix} 
(\beta+\epsilon)+(\beta-\epsilon)(1-2\beta)^k & 
(\beta-\epsilon)-(\beta-\epsilon)(1-2\beta)^k \\ (\beta+\epsilon) - 
(\beta+\epsilon) (1-2\beta)^k & (\beta-\epsilon) + 
(\beta+\epsilon)(1-2\beta)^k \end{pmatrix}.
\ee
Therefore
\be
\nu_\gamma \P^k_\epsilon &= \frac{1}{2\beta} \begin{pmatrix} \gamma 
[(\beta+\epsilon)+(\beta-\epsilon)(1-2\beta)^k] + (1-\gamma)[(\beta+\epsilon) - 
(\beta+\epsilon) (1-2\beta)^k] \\ \gamma 
[(\beta-\epsilon)-(\beta-\epsilon)(1-2\beta)^k] + (1-\gamma)[(\beta-\epsilon) 
+ (\beta+\epsilon)(1-2\beta)^k]   \end{pmatrix}' \\
&= \frac{1}{2\beta} \begin{pmatrix} (\epsilon + \beta )+(\beta(2\gamma - 1) - 
\epsilon)(1-2\beta)^k & (\beta-\epsilon) + (\beta(1-2\gamma)+\epsilon) 
(1-2\beta)^k \end{pmatrix}
\ee

The only functions measurable with respect to $\mathcal F_n$ are functions 
taking two values, so without loss of generality, consider a test function $\phi 
= \begin{pmatrix} -1 & 1 \end{pmatrix}$, and take $P_\epsilon$ as in 
\eqref{eq:Peps} then
\be
\left(n^{-1} \sum_{k=0}^{n-1} \phi(X_k) - \mu \phi \right)^2 &=  \left(n^{-1} 
\sum_{k=0}^{n-1} \phi(X_k)\right)^2 \\
&= \left(n^{-1} \sum_{k=0}^{n-1} -\mathbf 1\{X_k=0\} + \mathbf1 
\{X_k=1\}\right)^2 \\
&= \frac{1}{n^2} \sum_{j=0}^{n-1} \sum_{k=0}^{n-1} (\mathbf 1\{X_k=1\} - 
\mathbf 1\{X_k = 0\}) (\mathbf 1\{X_j=1\} - \mathbf 1\{X_j = 0\}) \\
&= \frac{1}{n^2} \sum_{j=0}^{n-1} \sum_{k=0}^{n-1} \mathbf 1\{X_k=X_j\}  - 
\mathbf 1\{X_k \ne X_j\}
\ee
so
\be
\mathbf E \left( n^{-1} \sum_{k=0}^{n-1} \phi(X_k) - \mu \phi \right)^2 = 
\frac{2\|\phi \|_*^2}{n^2} \sum_{k=0}^{n-1} \sum_{j \ge k} \mathbf P [ X_k=X_j ] 
- \mathbf P [X_k \ne X_j].
\ee

Consider $X_j^\epsilon,X_k^\epsilon$ with $j \le k$. Observe that
\be
\mathbf P [X_k^\epsilon = X_j^\epsilon \mid X_j^\epsilon = x] &= \mathbf 
P[X^\epsilon_{k-j} = x \mid X^\epsilon_0 = x]  \\
\mathbf P [X_k^\epsilon \ne X^\epsilon_j \mid X^\epsilon_j = x] &= \mathbf 
P[X^\epsilon_{k-j} \ne x \mid X^\epsilon_0 = x]
\ee
which from \eqref{eq:Pkeps} we can express as
\be
2\beta \mathbf P [X_k^\epsilon = X_j^\epsilon \mid X_j^\epsilon = 0] &= 
(\beta+\epsilon) + (\beta-\epsilon)(1-2\beta)^{k-j} \\
2\beta \mathbf P [X_k^\epsilon = X_j^\epsilon \mid X_j^\epsilon = 1] &= 
(\beta-\epsilon) + (\beta+\epsilon)(1-2\beta)^{k-j} \\
2\beta \mathbf P [X_k^\epsilon \ne X_j^\epsilon \mid X_j^\epsilon = 0] &= 
(\beta-\epsilon)-(\beta-\epsilon)(1-2\beta)^{k-j} \\
2\beta \mathbf P [X_k^\epsilon \ne X_j^\epsilon \mid X_j^\epsilon = 1] &= 
(\beta+\epsilon)-(\beta+\epsilon)(1-2\beta)^{k-j} 
\ee
and therefore
\be
\mathbf P[X_k^\epsilon = X_j^\epsilon] &= 
\frac{1}{(2\beta)^2}((\beta+\epsilon) + 
(\beta-\epsilon)(1-2\beta)^{k-j})((\epsilon + \beta)+(\beta(2\gamma - 1) - 
\epsilon)(1-2\beta)^j) \\
&+ \frac{1}{(2\beta)^2} ((\beta-\epsilon) + 
(\beta+\epsilon)(1-2\beta)^{k-j})((\beta-\epsilon) + 
(\beta(1-2\gamma)+\epsilon) (1-2\beta)^j) 
\ee
and
\be
 \mathbf P[X_k^\epsilon \ne X_j^\epsilon] &= 
\frac{1}{(2\beta)^2}((\beta-\epsilon)-(\beta-\epsilon)(1-2\beta)^{k-j}
)((\epsilon + \beta)+(\beta(2\gamma - 1) - \epsilon)(1-2\beta)^j) \\
&+ \frac{1}{(2\beta)^2} 
((\beta+\epsilon)-(\beta+\epsilon)(1-2\beta)^{k-j})((\beta-\epsilon) + 
(\beta(1-2\gamma)+\epsilon) (1-2\beta)^j).
\ee
Observe that $\beta (2\gamma-1) =  \frac{\alpha}{2} (2\gamma-1) = \alpha \|\mu 
- \nu\|_{TV}$. If we take $\nu_\gamma = \mu_\epsilon$, then $\|\mu-\nu\|_{TV} = 
\frac{\epsilon}{\alpha}$ and $\beta(2\gamma-1) - \epsilon = 0$ so the 
expressions above simplify to
\be
\mathbf P[X_k^\epsilon = X_j^\epsilon] - \mathbf P[X_k^\epsilon \ne 
X_j^\epsilon] = \frac{4 \epsilon^2}{\alpha^2} + (1-\alpha)^{k-j} \left(1- 
\frac{4 \epsilon^2}{\alpha^2} \right)
\ee
so
\be
\mathbf E \left( n^{-1} \sum_{k=0}^{n-1} \phi(X_k) - \mu \phi \right)^2 = 
\frac{4 |\phi|_*^2 \epsilon^2}{\alpha^2} 1  + |\phi|_*^2 S(n,\alpha_\epsilon) 
\left( 1 - \frac{4 \epsilon^2}{\alpha^2} \right)
\ee
since $\alpha_\epsilon = \alpha$. 
Notice that with $\epsilon = 0$, the result is $S(n,\alpha)$, which is half the 
bound in \eqref{eq:ExpectationStationary}. For nonzero $\epsilon$, the bias and 
rate are identical to that in \eqref{eq:ExpectationStationary}, but the constant 
on the second term is smaller: $1-\sfrac{4\epsilon^2}{\alpha^2}$ rather than 2. 
For $\epsilon < \sfrac{\alpha}{4}$, the constant is $\sfrac34$, meaning that 
the constant in \eqref{eq:ExpectationStationary} is too large by at most a 
multiplicative factor of $\sfrac83$. 

\section{Optimal Computation} \label{sec:Optimality}
The (potential) advantage of approximate MCMC is that longer sample paths can 
be obtained in equal computational (wall clock) time. For a transition kernel 
$\P$ satisfying Assumption \ref{ass:Doeblin}, suppose we have a collection of 
approximating kernels $\mathcal A = \{ \P_\epsilon : \epsilon < 
\sfrac{\alpha}{2} \}$, and that the computational cost of taking one step from 
$\P_\epsilon$ is $\tau(\epsilon)$. To avoid uninteresting cases, we will assume 
that $\tau(\epsilon)$ is decreasing in $\epsilon$. Define the \emph{speedup} 
$s(\epsilon)$ of $\P_\epsilon$ by
\be \label{eq:speedup}
s(\epsilon) = \frac{\tau(0)}{\tau(\epsilon)},
\ee
which is the path length one can obtain from $\P_\epsilon$ in the computation 
time it takes to obtain a path of length one from $\P$, or equivalently the 
ratio of a path length from $\P_\epsilon$ to a path length from $\P$ for equal 
computation time. If $\tau(\epsilon)$ is decreasing in $\epsilon$, then 
$s(\epsilon)$ is increasing in $\epsilon$.

Our goal in this section is to define a notion of optimality that leads to 
meaningful generalizations about approximate MCMC and what characteristics of 
$\P_\epsilon$ lead to good performance under the ``nice'' assumptions for which 
we give bounds in the previous section. Although one could conceivably study 
computational optimality under weaker conditions -- for example, geometric 
ergodicity of the original kernel $\mathcal P$ -- it is difficult to make 
meaningful generalizations because two geometrically ergodic Markov chains with 
identical geometric convergence rates can in practice behave very differently 
due to the presence of additional constants in the bounds. As such, we 
restrict ourselves to the uniformly mixing setting. 

The intuition is as follows. Suppose we knew only that $\P$ satisfies 
Assumption \ref{ass:Doeblin}, $\P_\epsilon$ satisfies Assumption 
\ref{ass:UniformTV}, and we are able to evaluate \eqref{eq:speedup}. Then we 
might define the optimal value of $\epsilon$ as the one that achieves the best 
statistical performance in the worst case. We call this the \emph{compminimax} 
$\epsilon$. Specifically, consider a discrepancy function $\DD$ measuring 
statistical performance of a path of length $n$ from $\P$, which we write 
generically as $\DD(\P,n)$. For example, if we take $\D = \D_{TV}$ then
\be
\DD(\P_\epsilon,n) = \left \| \mu - \frac1n \sum_{k=0}^{n-1} \nu \P^k_\epsilon 
\right\|_{TV}.
\ee
For a fixed computational budget $t$, we should compare $\DD(\P_\epsilon,t 
s(\epsilon))$ across different values of $\epsilon$. If we could evaluate $\DD$ 
exactly, then we would choose the value of $\epsilon$ that minimizes 
$\DD(\P_\epsilon,t s(\epsilon))$ and use paths from $\P_\epsilon$ to estimate 
posterior quantities of interest. In general, we can't compute $\DD$ exactly, 
and performing this analysis for a single $\P$ and set of approximations 
$\P_\epsilon$ would not support generalization about features of good 
approximations. Instead, we find $\epsilon$ that minimizes $\DD(\P_\epsilon,t 
s(\epsilon))$ in the worst case scenario, which is equivalent to minimizing a 
sharp upper bound. We call $\epsilon^*$ \emph{compminimax} for computational 
budget $t$ if
\be
\epsilon^* = \inf_\epsilon \sup_{\P_\epsilon} \DD(\P_\epsilon, t s(\epsilon)),
\ee
where the supremum is taken over all $\P_\epsilon$ satisfying Assumption 
\ref{ass:UniformTV} for some $\P$ satisfying \ref{ass:Doeblin}. Because the 
bound in Theorem \ref{thm:approxerr} is sharp, we can use it to evaluate 
$\epsilon^*$ with $\DD = \D_{TV}$. Although the bound in Theorem 
\ref{thm:ExpStationary} may not be sharp, we showed that the bound $\Delta(\mu 
f, \hat f_n^\epsilon) \le S(n,\alpha_\epsilon)$ is achieved for $\epsilon = 0$. 
We therefore perform a ``conservative'' analysis by comparing the bound in 
\eqref{eq:ExpectationStationary} to $S(n,\alpha)$ in determining whether 
$\epsilon^* \ne 0$ for $\DD = \Delta$. This analysis will therefore understate 
the efficiency of $\P_\epsilon$. 

\subsection{Numerical experiments}
Empirical analysis of $\epsilon^*$ requires choices of $\alpha$ and 
$s(\epsilon)$. We consider values between $\alpha=0.1$ and $\alpha=10^{-4}$. 
These values are chosen by considering the upper bound on the $\delta$-mixing 
time $n_{\delta}$ of the chain
\be
n_{\delta} = \inf\{n: \tv{\nu \P^n}{\mu} < \delta\} \label{eq:mixtime}
\ee  
A corollary of Assumption \ref{ass:Doeblin} is that (\ref{eq:mixtime}) is upper 
bounded by $\log(\delta)/\log(1-\alpha)$ when $\| \nu - \mu \|_{TV} = 1$. The 
corresponding worst-case $\delta$-mixing times for a few values of $\delta$ and 
the four values of $\alpha$ considered are given in Table \ref{tab:mixtime}. 
This range of $\alpha$ values gives mixing times between about $45$ and $92,000$ 
for $\delta \in (10^{-2},10^{-4})$, which reflects the empirical performance of 
many MCMC algorithms.\footnote{We acknowledge that the 
criteria used to select burn-in times can result in burn-in periods that do not 
correspond to a mixing time. However, comparing 
mixing times and burn-in periods still provides a useful heuristic, and in most 
cases violation of the criteria used to select a burn-in period is sufficient to 
guarantee that the chain has \emph{not} mixed.} 
\begin{table}[ht]
\caption{$\delta$-mixing times for kernels with $d(\mc P) = \alpha$ for different values of $\alpha$ and $\delta$.} 
\label{tab:mixtime}
\centering
\begin{tabular}{l|rrr}
  \hline
 & $\delta=0.01$ & $\delta=0.001$ & $\delta=0.0001$ \\ 
  \hline
$\alpha=0.1$ & 44 & 66 & 87 \\ 
  $\alpha=0.01$ & 458 & 687 & 916 \\ 
  $\alpha=0.001$ & 4,603 & 6,904 & 9,206 \\ 
  $\alpha=0.0001$ & 46,049 & 69,074 & 92,099 \\ 
   \hline
\end{tabular}
\end{table}

We consider four functional forms for $s(\epsilon)$: logarithmic, linear, 
quadratic, and exponential. Constants are chosen such that $s(0) = 1$ and 
$s(\alpha/2) = 100$. Plots of the four functions for $\alpha = 10^{-4}$ are 
shown in Figure \ref{fig:sepsilon}. 

\begin{figure}[h]
\centering
 \includegraphics[width=0.6\textwidth]{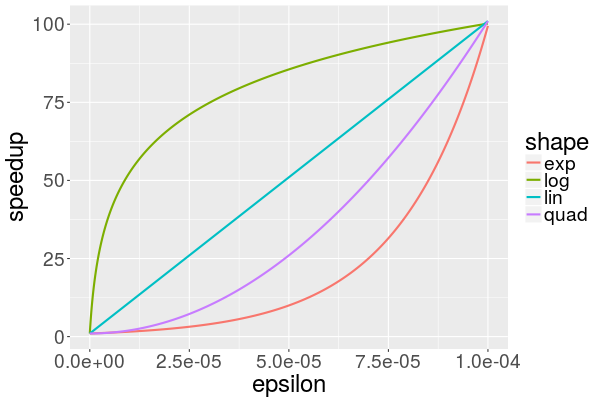}
 \caption{Speedup functions used in analysis of compminimax.} 
\label{fig:sepsilon}
\end{figure}

For each choice of $s(\epsilon)$ and a grid of values of $t \in [1,10^5]$, we 
compute $\epsilon^*$ by minimizing the upper bounds in Theorem 
\ref{thm:approxerr} or Theorem \ref{thm:ExpStationary} with $n = s_{\epsilon} 
t$, depending on the choice of $\D$. When $\D= \Delta$, we put $\epsilon^* = 0$ 
when the minimum of the upper bound in Theorem \ref{thm:ExpStationary} is larger 
than $S(t,\alpha)$. Therefore, the resulting value is exactly $\epsilon^*$ when 
$\D = \D_{TV}$, and a lower bound on $\epsilon^*$ when $\D = \Delta$. When $\D = 
\Delta$, Remark \ref{rem:Aepsilon} indicates that $\alpha_\epsilon \in 
[\alpha-2\epsilon,\alpha+2\epsilon]$, so we perform the experiments assuming 
$\alpha_\epsilon = \alpha$, the center of this interval. 

Results for the numerical experiments are summarized in Figure \ref{fig:opteps}. 
The top two panels show results for $\D_{TV}$. In this case, it is clear that 
over a range of values of $\tau_{\max}$ substantially larger than the mixing 
times, the optimal value of $\epsilon$ is nonzero, regardless of the form of 
$s(\epsilon)$. As $n$ increases, the (approximate) optimal value of $\epsilon$ 
naturally decreases. 

\begin{figure}[h]
 \centering
 \begin{tabular}{cc}
  $D_{TV}, \alpha = 0.1$ & $D_{TV}, \alpha = 10^{-4}$ \\
  \includegraphics[width=0.45\textwidth]{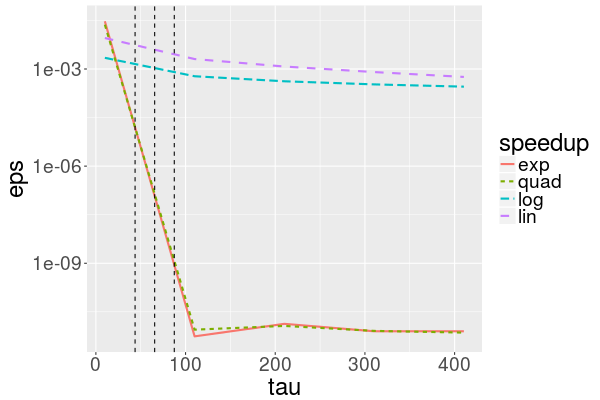} & 
  \includegraphics[width=0.45\textwidth]{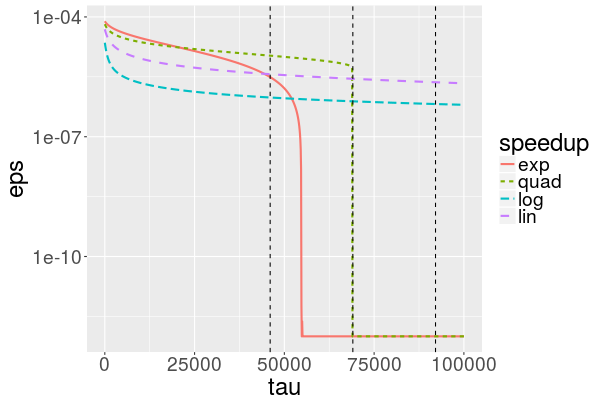} \\
  $D_{L_2}, \alpha = 0.1$ & $D_{L_2}, \alpha = 10^{-4}$ \\
  \includegraphics[width=0.45\textwidth]{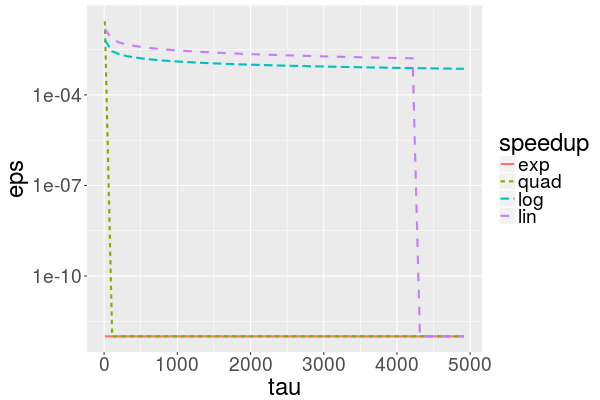} & 
  \includegraphics[width=0.45\textwidth]{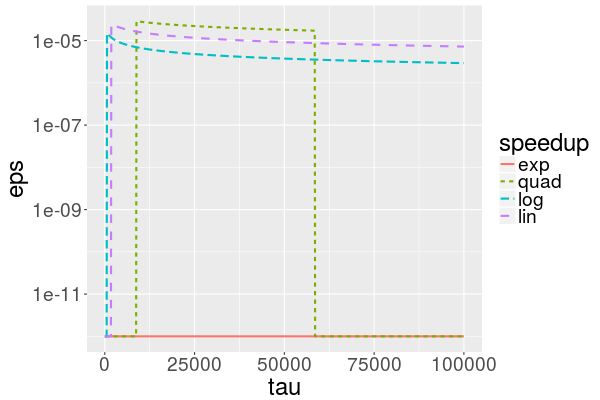} \\
 \end{tabular}
\caption{Plot of $\epsilon_c(\tau_{\max})$ (vertical axis) for values of 
$\tau_{\max} \le 10^5$ (horizontal axis), assuming $\tau_{\mc P}(t) = t$.  
Vertical dashed lines in the top two panels shown at the worst-case 
$\delta$-mixing times for the values of $\delta$ shown in Table 
\ref{tab:mixtime}. Top two panels show results for $D_{TV}$ and bottom two 
panels show results for $D_{L_2}$. Note different horizontal axis scale in the 
left top and bottom panels and that vertical axes use log scale -- the scales 
were chosen to make notable features 
more visible. } \label{fig:opteps}
\end{figure}

The bottom panel in Figure \ref{fig:opteps} shows results for $\D = \Delta$. In 
this case, we assume the chain starts at its stationary distribution, so any 
advantage of $\mathcal P_\epsilon$ in this case is entirely due to the variance 
of the time averages. The choice of $\D = \Delta$ instead of $\D_{TV}$ results 
in (approximate) values of $\epsilon^*$ that are larger at every value of $t$. 
Additionally, values of $\epsilon^*$ are significantly larger than zero well 
beyond the maximum value of $t$ considered in each case ($5,000$ when $\alpha = 
0.1$ and $10^5$ when $\alpha = 10^{-4}$). This reflects the fact that high 
autocorrelations for worst-case functions make variance of MCMC ergodic averages 
the dominating factor in the $\Delta$ error bounds even for relatively long 
sample paths, even when starting from stationarity.

By and large, these experiments were designed to consider a setting in which 
approximate MCMC may be beneficial. We considered cases where a significant 
speedup is available at relatively low cost in terms of the induced 
approximation error, and set $\alpha_\epsilon = \alpha$, which is not the 
``worst case'', since $\alpha_\epsilon$ can be as small as $\alpha-2\epsilon$. 
Under these quite favorable conditions, approximate MCMC may have significant 
advantages. Moreover, it is clear from these experiments that a 
strategy that gradually decreases $\epsilon$ as the chain extends will dominate  
any strategy that uses a fixed $\epsilon$ throughout, assuming that the 
adaptation can be done costlessly. 

\section{Applications} \label{sec:algos}
While it appears that in some circumstances approximate MCMC may offer 
significant benefits, it is unclear whether constructing an accurate 
approximation with sufficient speedup is practicable in real applications. In 
this section, we aim to address this in applications to large sample logistic 
regression and Gaussian process regression. In the first case, $\P_\epsilon$ is 
formed by substituting an approximation to a random matrix based on subsets of 
data, while in the latter a covariance matrix is approximated using an 
approximate partial eigendecomposition. This basic strategy -- approximating an 
expensive matrix computation using a subset of the data -- is broadly 
applicable.


\subsection{Example: logistic regression}
We consider a minibatch approximation of a covariance matrix within a Gibbs 
sampler for logistic regression. We are able to obtain theoretical guarantees on 
approximation error under weaker conditions on the data than 
\cite{korattikara2014austerity}.

We analyze aMCMC based on subsets of data for the logistic regression model 
with likelihood $z_i \sim \Bern{e^{W_i \beta}\{1+e^{W_i \beta}\}^{-1}}$ and 
Gaussian prior $\beta \sim \No{b}{B}$.  
\citet{polson2013bayesian} describe the P\'{o}lya-Gamma distribution on 
$[0,\infty)$, a two-parameter family that we denote $\PG{a_1}{a_2}$. They 
further show that the Gibbs sampler with state variables $x = (\beta,\omega)$ 
and update
\be
 \omega_i \mid \beta &\sim \PG{1}{W_i\beta}, \quad \beta \mid y, \omega \sim 
\No{S_N (X' \kappa + B^{-1} b)}{S_N}, \label{eq:pgsampler}
\ee
where $S_N = (W'\Omega W + B^{-1})^{-1}$, $\kappa = z-1/2$, and $\Omega = 
\mbox{diag}(\omega_1,\ldots,\omega_N)$, has invariant measure with $\beta$ 
marginal equal to $\mu(\beta \mid z,W)$. 

When $N$ is large and $p$ -- the dimension of $\beta$ -- is moderate, the main 
computational bottleneck is calculating $W' \Omega W$. This step has 
computational complexity $\mc O(N p^2)$. An approximating Markov chain that uses 
subsets of size $m$ will reduce the computational complexity of each step to 
$\mc O(m p^2)$, a large computational speedup when $m \ll N$.

We analyze aMCMC with the update rule
\be  \label{eq:scaledlikpg1} 
m &\sim \Binom(N,q) \\
 V \mid \beta &\sim \mbox{Subset}(m, \{1,\ldots,N\}), \\
 \omega_i \mid \beta,V  &\sim \PG{1}{W_i\beta} \quad i \in V, \\ 
 \beta \mid z,\omega,V &\sim \No{S_V W'\kappa }{S_V}, 
\ee
where $S_V = \left( \frac{N}{m} W_V' \Omega_V W_V + B^{-1} \right)^{-1}$ uses 
a subsample-based approximation to $W' \Omega W$ and $\alpha$ may depend on 
$\beta$. \citet{choi2013polya} showed that the algorithm in \eqref{eq:pgsampler} 
satisfies a Doeblin condition.\footnote{The authors only claim the weaker 
uniform ergodicity condition, but they prove a uniform minorization condition on 
the entire state space, which is exactly Assumption \ref{ass:Doeblin}.} Theorem 
\ref{thm:logregerr2} shows that if $m$ is chosen adaptively depending on 
$x$, Assumption \ref{ass:UniformTV} is satisfied.

\begin{theorem}[Error for minibatch approximation] \label{thm:logregerr2}
Define $\lambda^+$ and $\lambda^-$ as the smallest and largest eigenvalues 
of $\frac1N W'\Omega W$. For every $x \in \X$ and $\epsilon > 0$ there 
exists a random constant $K$ and universal constants $C$ and $\sigma^2$ such 
that with probability
\be
2pe^{-\frac{\delta_\epsilon^2/2}{\sigma^2+\left(\frac{1}{Nm}-\frac{1}{N^2} 
\right)C K \delta_\epsilon}}
\ee
we have $\| \P(x,\cdot) - \P_\epsilon(x,\cdot)\|_{TV} < \epsilon$
whenever 
\be
\delta_\epsilon < \frac{(\lambda^- + (N\eta)^{-1})^2}{2p(\lambda^+ + 
\frac{\lambda^-}{2}+(N\eta)^{-1})} \epsilon^2.
\ee
\end{theorem}

The proofs of Theorem \ref{thm:logregerr2} and subsequent results in this 
section are given in the Appendix. The result is established by first 
showing a concentration inequality for the operator norm $\| m^{-1} W_V 
\Omega_V W_V - N^{-1} W'\Omega W\|$, then using this to obtain a total 
variation bound. The constants in Theorem \ref{thm:logregerr2} are made 
explicit in the following remark.
\begin{remark} \label{rem:LogregCons}
With $\theta_i = W_i \beta$, we can take 
\be
K &= \bigvee_i \omega_i, \quad C = \bigvee_i \| W_i \|_2^4 \\
\sigma^2 &= \left( \frac{1}{Nm} - \frac1{N^2} \right) \sum_i \left( \frac{1}{8 
\theta_i^3} \sech^2\left( \frac{\theta_i}2 \right) 
\left( 2 \sinh(\theta_i) + \theta_i (\cosh(\theta_i) - 3) \right) \right) \|W_i 
\|^4_2
\ee
Further, there exist convex functions $V_i(t)$ such that for $t>0$
\be
\Pp[K > t] = 1-\prod_i (1-e^{-V_i(t)}).
\ee
\end{remark}

Suppose that $W_i$ has independent standard Gaussian entries, so that 
$\|W_i\|_2^2$ is a $\chi^2_p$ random variable, with $\| W_i\|_2^4$ roughly 
order $p^2$. In this case, the probability of achieving the approximation error 
is roughly
\be
2pe^{-\frac{\delta_\epsilon^2/2}{\sigma^2+\left(\frac{1}{Nm}-\frac{1}{N^2} 
\right)C K \delta_\epsilon}} &\approx 
2pe^{-\frac{\delta_\epsilon^2/2}{\left(\frac{1}{Nm}-\frac{1}{N^2} 
\right) N p^2 K \delta_\epsilon}} \\
&\approx 2pe^{-\frac{\delta_\epsilon}{2 \left(\frac{1}{Nm}-\frac{1}{N^2} 
\right) N p^2 K}} 
\ee
so if we take $m = N^{\gamma}$ for $\gamma < 1$ we obtain 
\be
\Pp[ \| \P(x,\cdot) - \P_\epsilon(x,\cdot)\|_{TV} < \epsilon ] \approx 
2pe^{-\frac{H \epsilon^2 N^\gamma}{2 p^2 K}} 
\ee
for 
\be
H = \frac{(\lambda^- + (N\eta)^{-1})^2}{2p(\lambda^+ + 
\frac{\lambda^-}{2}+(N\eta)^{-1})},
\ee
so that rates of $e^{-N^\gamma}$ can be achieved with minibatches of size 
$N^\gamma$. This also implies that $\epsilon$ must go to zero no faster than 
$N^{-\gamma/2}$ to achieve the error condition in large samples using 
minibatches of size $N^\gamma$.  

In applications, the constants will of course be important, especially when 
$\gamma$ is small. The hardest constant to estimate is $K$, so we perform a 
simulation study of its magnitude. We sample $N = 10^6$ vectors $W_i$ of length 
$p=10$ with iid standard normal entries, then repeatedly sample 
\be
\beta \sim \No{0}{I_p}, \quad \omega_i \sim \PG{1}{W_i \beta}
\ee
and compute $K=\max_i \omega_i$, $\lambda^+$, and $\lambda^-$. Results for 
1,000 replicate samples are shown in Figure \ref{fig:ConstantSims}. The 
distribution of $K$ is concentrated between 2 and 4, $\lambda^-$ between 0.05 
and 0.2, and $\lambda^+$ between 0.14 and 0.24. Under these admittedly 
idealized conditions, the constants in the bounds are not particularly onerous.

\begin{figure}[h]
 \centering
 \includegraphics[width=\textwidth]{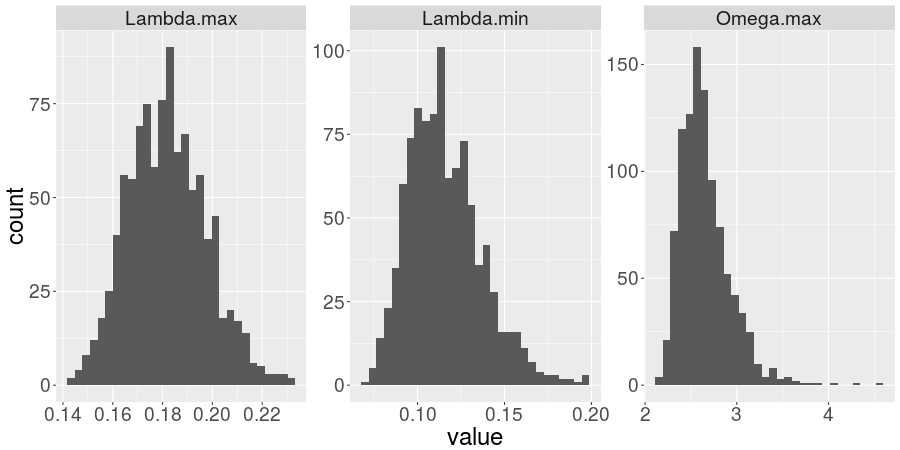}
 \caption{Histograms of $\lambda^+$ (left), $\lambda^-$ (center), and $K$ 
(right) for simulation study described in text.} \label{fig:ConstantSims}
\end{figure}

\subsection{Computational example: record linkage dataset}
We applied the sampler in \eqref{eq:scaledlikpg1} to a record linkage dataset
from the Epidemiological Cancer Registry of North Rhine-Westphalia 
\cite{sariyar2011controlling}.\footnote{In the applications we use a fixed 
value of $m$ rather than sampling it from a Binomial.} We use a portion of the 
data consisting of 2.2 
million observations.\footnote{Code for the applications is available at 
https://github.com/jamesjohndrow/amcmc-jmmd} The
covariates contain information about the similarity of each record pair, and the 
outcome is a binary indicator of whether the two records pertain to the same 
individual. To produce the final covariate data used in this example, we 
generate all two-way interactions and quadratic terms, then select the first 
twenty principal components of this data matrix. This alleviates 
high collinearity in the pairwise interaction matrix.  

The algorithm in \eqref{eq:scaledlikpg1} is applied for a range of minibatch sizes
between 11,000 and 2.2 million (the full data). For each minibatch size, we run the
algorithm for 5,000 iterations. To assess peformance of aMCMC, the final 4,000 
samples from the exact MCMC are treated as samples from the target measure.
We then estimate discrepancy measures for the path obtained from aMCMC with
each minibatch size as a function of computation (wall clock) time. The two discrepancy
measures used are the two sample Anderson-Darling statistic and the 
squared Frobenius norm
of the difference between the path covariance matrix for $\beta$ and the covariance
matrix based on the full sample of 4,000 from the exact sampler. Specifically, define
\be \label{eq:EmpCov}
\hat \beta_n &= \frac1n \sum_{k=0}^{n-1} \beta_k, \quad 
\hat \Sigma_n = \frac1n \sum_{k=0}^{n-1} (\beta-\hat \beta_k) (\beta-\hat 
\beta_k)'
\ee
then we compute $\| \hat \Sigma^{(\epsilon)}_n - \hat \Sigma_{4000} \|_F^2$ as 
a discrepancy measure, where $\hat \Sigma^{(\epsilon)}_n$ is obtained by 
computing \eqref{eq:EmpCov} for the minibatch algorithm for different values of
$n$ and minibatch sizes.

Results are shown in Figure \ref{fig:DLogReg}. Conclusions are somewhat different
for the two discrepancy measures. The minibatching algorithm performs better 
for estimation of the covariance matrix, with the sample sizes of 440,000 or
smaller being optimal for computation times up to 5,000 seconds. The exact
sampler does not become optimal with respect to this discrepancy measure until
computation time exceeds 9,000 seconds. On the other hand, the exact MCMC 
is optimal with respect to the Anderson-Darling statistic after less than 1,000 seconds
of computation time. 

\begin{figure}[h]
\centering
\begin{tabular}{cc}
$\| \hat \Sigma_{4000} - \hat \Sigma^{(\epsilon)}_n \|_F^2 $ & Anderson-Darling 
statistic \\
\includegraphics[width=0.5\textwidth]{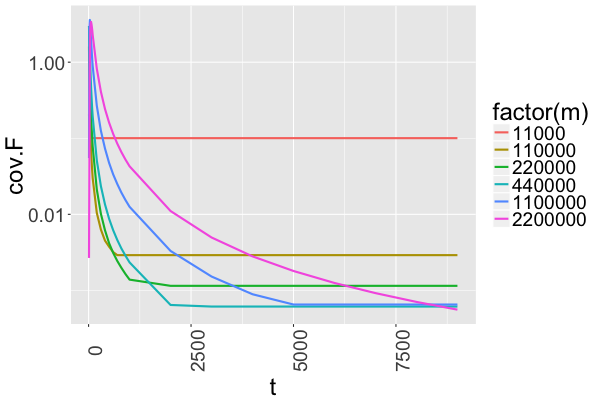} & \includegraphics[width=0.5\textwidth]{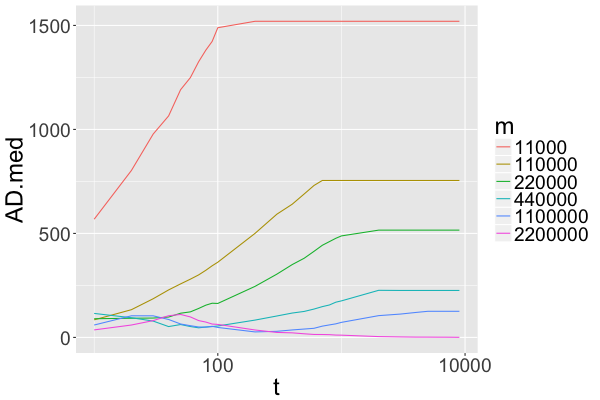} 
\end{tabular}
\caption{Measures of discrepancy based on the covariance matrix (left) and the two sample 
Anderson-Darling statistic (right) as a function of computation time (horizontal axes, left panel)
or log computation time (horizontal axis, right panel). The
plot at right shows the median statistic over the twenty components of $\beta$.} \label{fig:DLogReg}
\end{figure}

This suggests that the performance and optimality of aMCMC
varies considerably depending on the discrepancy measure used. This was 
already evident in the discussion of compminimax optimality. When started
from the invariant measure, $\epsilon=0$ is always optimal with respect to
$\D_{TV}$, but $\epsilon > 0$ will generally be optimal for small enough 
computational budget regardless of the starting measure when using $\Delta$ 
as the discrepancy measure. Here, we see that aMCMC performs better
when the goal is estimation of the second moment of the identity function
than when the goal is estimation of a large class of $\mu$-measurable 
functions, for which small error would correspond to small values of
the Anderson-Darling statistic.

\subsection{Low-rank approximations to Gaussian processes} \label{sec:gp}
Exact MCMC algorithms involving Gaussian processes on samples of size $N$ 
scale as $\mc O(N^3)$, 
leading to numerous proposals for approximations. Prominent examples include the 
predictive process \cite{banerjee2008gaussian} and subset of regressors 
\cite{smola2001sparse}, which both employ low-rank approximations to the 
Gaussian process covariance matrix. 

\subsubsection{Model}
Consider the nonparametric regression model
\be
 z_i = f(W_i) + \eta_i, \qquad \eta \sim \No{0}{\sigma^2 I_n}, \qquad i = 
1,\ldots,N, \label{eq:npreg}
\ee
where $z_i$ are responses, $W_i$ are $p \times 1$ covariate vectors, 
and $f$ is an unknown function. A typical Bayesian approach  assigns a Gaussian 
process prior to $f$, $f \sim \gp(\mu(\beta),c(\gamma))$, with 
$\mu(\cdot;\beta)$ a mean function with parameter $\beta$ and 
$c(\cdot,\cdot;\gamma)$ a covariance function parametrized by $\gamma$, so that 
for $W_1,W_2$, $\mathbf E f(W_1) = \mu(W_1;\beta)$ and 
$\cov(f(W_1),f(W_2)) = c(W_1,W_2;\gamma)$. Here we will assume $\mu(W;\beta) 
\equiv 
0$, so that the model parameters consist of $x = 
(\sigma^2,\gamma)$. Although we focus on model (\ref{eq:npreg}), our analysis 
applies to general settings involving Gaussian processes (e.g., for spatial 
data).

The covariance kernel $c(W_1,W_2;\gamma)$ is positive definite, so that the $N 
\times N$ covariance matrix $S$ given by $S_{ij} = c(W_i,W_j;\gamma)$ is full 
rank. However, as noted by \citet{banerjee2012efficient}, in many cases when 
$N$ 
is large, the matrix $S$ is poorly conditioned and nearly low-rank. This 
motivates low-rank approximations of $S$.  As an example, consider the squared 
exponential kernel $c(W_1,W_2;\gamma) = \tau^2 e^{-\phi \|W_1-W_2\|^2}$, 
with $\gamma = (\tau^2,\phi)$ consisting of a decay parameter $\phi$ and scale 
$\tau^2$. In this case we write $S = \tau^2 \Sigma$, where $\Sigma_{ij} = 
e^{-\phi \|W_i-W_j\|^2}$, and we have $x = (\sigma^2,\tau^2,\phi)$. We 
adopt the common prior structure
\be \label{eq:gpprior}
 \phi &\sim \mbox{DiscUnif}(\phi_1,\ldots,\phi_d) \\
 \tau^{-2} &\sim \Gam{a_{\tau}}{b_{\tau}}, \\  
\sigma^{-2} &\sim \Gam{a_{\sigma}}{b_{\sigma}}.
\ee

We consider a MCMC sampler (e.g. \cite{finley2009improving}) which 
iterates
\begin{enumerate}
 \item Sample $\sigma^2, \tau^2 | z, \phi$ using a Metropolis-Hastings step.
 We use a Gaussian random walk on $(\log(\sigma^2),\log(\tau^2))$ as proposal.
 \item Set $p_l ~\propto~ |\tau^2 \Sigma^{(l)}+\sigma^2 I_n|^{-1/2} 
e^{-z' (\tau^2 \Sigma^{(l)}+\sigma^2 I_n)^{-1} z/2 }$, where 
$\Sigma^{(l)}_{ij} =e^{-\phi_l \|W_i-W_j\|^2}$, and sample 
 $$\phi \sim \Disc{\{\phi_1,\ldots,\phi_d\}}{(p_1,\ldots,p_d)}.$$
\end{enumerate}

\subsubsection{Approximate MCMC for Gaussian processes}
We replace $\Sigma$ with a low-rank approximation $\tSigma \approx \Sigma$ to 
construct a transition kernel $\P_\epsilon(x,\cdot)$.  We focus on 
approximations of the form 
\be
 \Sigma \approx \Ueps \Lameps \Ueps' = \tSigma,  \label{eq:halko}
\ee
where $\Ueps$ is orthonormal, and $\Lameps$ is nonnegative and diagonal. All 
of the steps of the MCMC sampler contain the quadratic form $z' (\tau^2 
\Sigma + \sigma^2 I)^{-1} z$.
The approximation instead uses $\Psieps = (\tau^2\tSigma + \sigma^2 I)^{-1}.$

For algorithms in this class, we obtain the result in Theorem \ref{thm:gperror}.
\begin{theorem}[Gaussian process approximation error bounds] \label{thm:gperror}
If $\tSigma$ is a partial rank-$k$ eigendecomposition of $\Sigma$, then for 
every $\epsilon > 0$, there exists a $\P_\epsilon(x,\cdot)$ that replaces 
$\Sigma$ with 
$\tSigma$ achieving $\|\Sigma-\tSigma\|<\delta$, where 
$\delta$ depends on $x$, such that 
\be
\sup_{x \in \X} \tv{\P(x,\cdot)}{\P_\epsilon(x,\cdot)} < \epsilon. 
\label{eq:tvgpkern}
\ee
Further, we have $\epsilon = \bigO{\delta}$.
\end{theorem}

In practice, when $N$ is so large that one cannot compute an exact partial 
eigendecomposition,  Algorithm 2 of \citet{banerjee2012efficient} provides an 
accurate approximation, so the approximation is feasible regardless of $N$.
The algorithm of \cite{banerjee2012efficient} is equivalent to the 
\emph{adaptive randomized range finder} (Algorithm 
4.2) combined with the \emph{eigenvalue decomposition via Nystr\"{o}m 
approximation} (algorithm 5.5) in \citet{halko2011finding}.  Algorithm 2 attains 
approximation error $\|\Sigma - \tSigma\|_F < \delta$ with probability 
$1-10^{-d}$ where both $\delta$ and $d$ can be specified. Not all low-rank 
approximations of $\Sigma$ approximate a partial eigendecomposition, so Theorem 
\ref{thm:gperror} suggests a possible advantage of Algorithm 2 of 
\cite{banerjee2012efficient} over alternatives.The following remark describes 
the achievable rates in $\delta$ as a function of $\epsilon$ and $N$.
\begin{remark}[Rates for aMCMC for Gaussian process] \label{rem:gprates}
Define $k \equiv \rank(\Sigma_\epsilon)$. Controlling $\delta$ to satisfy 
Assumption 
\ref{ass:UniformTV} requires that
 \be
\bigg| &e^{-\frac{N-k}{2} \left[ 
\frac{\sigma^2_x-\sigma^2_y}{\sigma_y^2\sigma^2_x}  - 
\log\frac{\sigma_y^2}{\sigma^2_x} \right]} -e^{-\frac{N-k}{2} \left[ 
\frac{\tau^2_x \delta + \sigma^2_x - \tau_y^2 
\delta - \sigma_y^2}{(\tau_y^2 \delta + \sigma_y^2)(\tau^2_x \delta + 
\sigma^2_x)} -  \log\frac{\tau_y^2 \delta + \sigma_y^2}{\tau^2_x \delta + 
\sigma^2_x} \right]}  \bigg|
 \ee
be small, where $\sigma_y^2, \tau_y^2$ are the proposed values and  
$\sigma^2_x,\tau^2_x$ the current state in the Metropolis-Hastings algorithm.  
To achieve constant approximation error, $\delta$ must decrease with $N$; if 
the spectrum of $\Sigma_{\epsilon}$ decays rapidly, the decrease can be slow. 
In addition, a smaller value of $\delta$ is required when $\tau^2$ is large 
relative to $\sigma^2$, suggesting that a higher signal to noise ratio requires 
better approximations.
\end{remark}
 
One step of $\P_\epsilon$ scales as $\bigO{N^2 k}$, so increasing $k$ to 
achieve a better approximation has computational cost $N^2$. However, the 
relationship between $k$ and $\delta$ -- and therefore between $k$ and 
$\epsilon$ -- depends on the spectrum of $\Sigma$. The speedup is proportional 
to the rate at which the eigenvalues of $\Sigma$ decay, so virtually any rate 
is conceivable, and will depend on the points at which the process is sampled 
and the values of $\phi$ that account for most of the posterior probability.

\subsection{Computational example: Abalone dataset}
We apply the aMCMC algorithm for Gaussian process regression that utilizes
low-rank approximations of the form \eqref{eq:halko} to the Abalone dataset
\cite{nash1994population, waugh1995extending}. The data consist of $N=4,177$ 
observations of the age $y$ of Abalone,
with $p=8$ covariates used as predictors. Following common practice in the 
spatial statistics literature, we choose a discrete uniform prior on $\phi$ with 
mass on twenty points in the range $-\log(0.01)/(\zeta d_{\max})$, where
$d_{\max}$ is the maximum Euclidean distance between sample points and
$\zeta$ ranges from $0.1$ to $0.9$. This results in the correlation between
observations decaying to 0.01 at distances between 0.1 times $d_{\max}$ and 
$0.9$ times $d_{\max}$. This results in twenty possible values of $\Sigma$, 
each of which has a distinct eigenstructure. The accuracy of a rank $k$
approximation to $\Sigma$ in the Frobenius norm will depend on the
ratio
\be \label{eq:PerExpl}
\frac{\sum_{i=1}^k \lambda_i}{\sum_{i=1}^N \lambda_i},
\ee
the proportion of variation explained by the first $k$ eigenvectors of
$\Sigma$. Figure \ref{fig:eigenvalues} shows \eqref{eq:PerExpl} for
a selection of values of $\zeta$ used in the analysis.

\begin{figure}[h]
\centering
\includegraphics[width=0.75\textwidth]{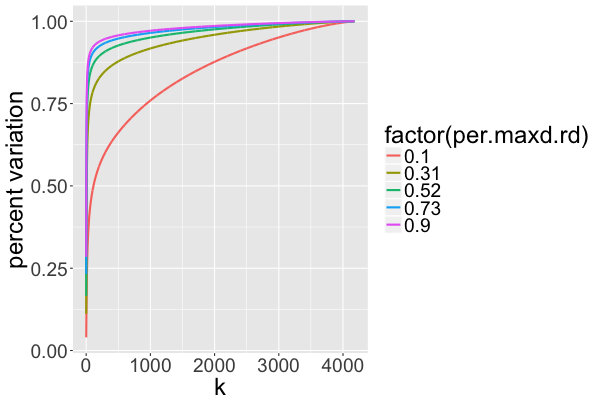}
\caption{Proportion of variation explained (vertical axis) as a 
function of the rank $k$ of the approximation (horizontal axis). 
Different values of $\zeta$ (the percent of maximum distance
at which the correlation decays to 0.01) are indicated
by color. } \label{fig:eigenvalues}
\end{figure}

Computation was performed for the low-rank Gaussian process approximations as 
described in section \ref{sec:gp}. We use an adaptive Metropolis strategy to 
tune the proposal covariance for $(\sigma^2,\tau^2)$.  Here, instead of 
approximation to the posterior, we evaluate the performance of the approximation 
by how close $\P$ is to $\P_\epsilon$ in the total variation norm. Since $\P$ 
and $\P_\epsilon$ use the same proposal kernel
$Q$, we have 
\be
\| \P - \P_\epsilon\|_{TV} &= \sup_{\varphi: |\varphi|<1} \int  
(\varphi(x)-\varphi(y)) 
|\alpha(x,y)-\alpha_\epsilon(x,y) | Q(x,y) 
dy \\
&\le 2 \int |\alpha(x,y)-\alpha_\epsilon(x,y) | 
Q(x,y) dy,
\ee
where $\alpha(x,y)$ is the acceptance ratio for moving from
$x$ to $y$ for the exact algorithm, and 
$\alpha_\epsilon(x,y)$
is the analogue for aMCMC. So to upper bound $\| \P - \P_\epsilon\|_{TV}$, it is 
enough to estimate the expected value of the absolute difference between the 
acceptance ratios at any point, with the expectation taken with respect to the 
proposal. Obviously, this is infeasible to do at every $x$. Instead, we compute 
$|\alpha(x,y)-\alpha_\epsilon(x,y) |$ for paths $x_0,x_1,\ldots,x_n$ taken from 
$\P$, an approximation of the expectation of this quantity with respect to the 
invariant measure. A similar analysis applies to the absolute differences 
between the transition probabilities for the griddy Gibbs update for $\phi$, 
which we also compute.

Figure \ref{fig:GPDists} shows results for four different values of $k$ for both the
absolute difference between the Metropolis acceptance ratios and the maximum
absolute difference between the griddy Gibbs transition probabilities. Clearly, 
as $k$
increases, the distribution of the error in approximation of both quantities  
shifts toward zero. This shift is expected as the approximation becomes more 
accurate. However, even when using $k=2,000$ of the 4,177 components, the 
Metropolis acceptance ratios can still be arbitrarily bad with non-negligible 
probability. This likely reflects the fact that noticeable posterior mass is 
placed on the smallest values of $\phi$, which correspond to the largest values 
of $\zeta$ in Figure \ref{fig:eigenvalues}. For such values of $\phi$, the 
spectrum of $\Sigma$ decays slowly, and $k \approx N$ is necessary to make an 
accurate approximation. It also hints at a pattern we have frequently observed 
in working with aMCMC algorithms: likelihood ratios tend to be unforgiving of 
small perturbations. Nonetheless, if one is willing to restrict the analysis to 
cases where dependence in the regression function is more local, this analysis 
suggests that accurate low-rank approximation is possible.

\begin{figure}[h]
\centering
\includegraphics[width=0.9\textwidth]{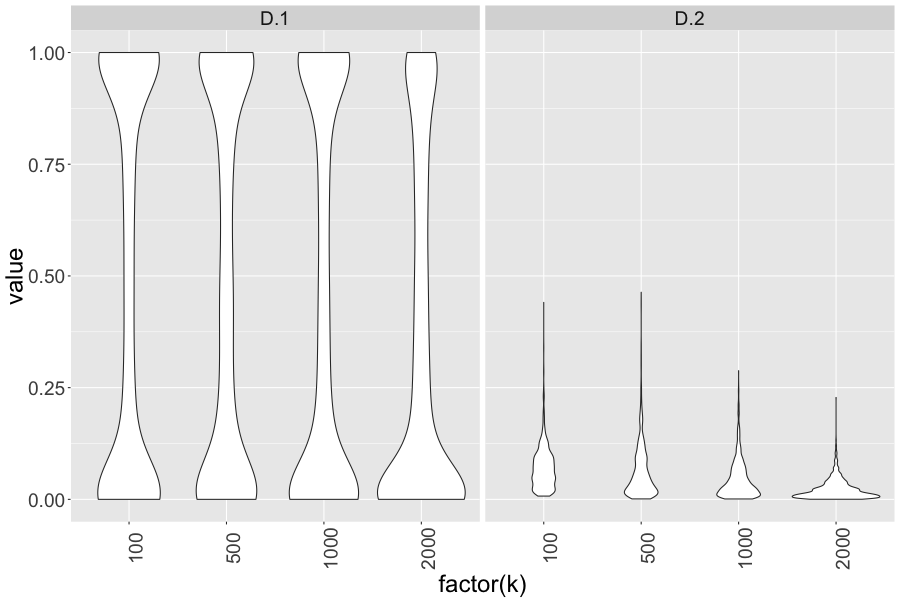}
\caption{Distribution of 
$|\alpha(x,y)-\alpha_\epsilon(x,y)|$ (left panel) and 
of the maximum absolute difference in griddy Gibbs transition probabilities 
(right panel) for different numbers of components used in the low rank 
approximation to $\Sigma$ (k). } \label{fig:GPDists}
\end{figure}

\section{Discussion}
A large literature exists showing general approximation error bounds for Markov 
chain Monte Carlo, including more recent work on perturbation bounds applied to 
approximate MCMC. In general, these bounds provide guarantees about long-time 
dynamics of the approximation, but do not give much guidance about whether such 
approximations are actually advantageous with limited computational budgets. 
Working with simple, sharp bounds under ideal conditions, we propose general 
principles regarding when approximate MCMC is likely to be useful. In 
particular, when the exact algorithm mixes slowly -- i.e. when $\alpha$ in Assumption
\ref{ass:Doeblin} is close to 0 -- and significant speedups are available that introduce 
small approximation error, approximations are likely to 
enjoy an advantage over exact MCMC. It is important to note that ergodicity 
conditions such as uniform or geometric ergodicity do not guarantee ``fast
mixing,'' since the convergence rate can be arbitrarily close to one -- and in many
applied settings MCMC apparently mixes very slowly.

Our analysis of minibatching suggests that 
it does not provide a general ``formula'' for constructing approximations with 
significant computational advantages, though in certain cases it may be useful.
This is consistent with the analysis of \citet{bardenet2014towards} and 
\citet{pillai2014ergodicity}. Non-random subsampling strategies and 
bias-correction can improve performance \cite{quiroz2014speeding} relative to 
random sampling, and when the resulting likelihood estimators are 
unbiased there is an obvious connection with the literature on 
pseudo-marginal MCMC that merits further investigation. 

Constructing very efficient and scalable MCMC algorithms for complex target 
distributions will likely necessitate a combination of strategies including, but 
not limited to, approximate MCMC. We hope the insights offered here will provide 
some guidance to practitioners in determining when approximations could be 
useful and how to go about constructing them such that basic properties of the 
time averages are preserved.

\section*{Acknowledgments}
The authors thank Galen Reeves for suggesting the concentration argument used 
in proving Theorem \ref{thm:logregerr2}.

\begin{appendix}
\section{Proof of Theorem \ref{thm:logregerr2}}

We will show $\sup_{x \in \X} \| \P(x,\cdot) - \P_\epsilon(x,\cdot) \|_{TV} 
< \epsilon$ with high probability. Here, $\P$ is the transition kernel based on 
the full sample of $N$ observations for the Gibbs sampler in 
\eqref{eq:pgsampler}, and $\P_\epsilon$ uses subsets of data of size $m \le N$ 
to approximate $W'\Omega W$ by $\frac{N}{m} W_V' \Omega_V W_V$, in accordance 
with the update rule in \eqref{eq:scaledlikpg1}.  

Define
\be
\Sigma_N(\beta) &= \frac{1}{N} W' \Omega W, \qquad \Sigma_V(\beta) = 
\frac{1}{|V|} W_V' \Omega_V W_V \\
S_N(\beta) &= \frac{1}{N} (\Sigma_N(\beta) + B^{-1}/N)^{-1}, \qquad S_V(\beta) 
= \frac{1}{N}(\Sigma_V(\beta) + B^{-1}/N)^{-1};
\ee
we will sometimes suppress dependence on $\beta$ for notational convenience. 
Recall that the distribution of $\beta$ given $\omega$ in the Gibbs update is 
$\No{S_N 
X'\kappa}{S_N}$, with $\kappa=z-1/2$. Let $\mc N(\cdot;\mu,\Sigma)$ be the 
measure 
induced by a normal random variable with mean $\mu$ and covariance $\Sigma$. 

We first show that for every $\delta$ there exists $m$ 
for which
\be
\norm{\Sigma_N - \Sigma_V} \le \delta
\ee
with high probability, where for matrices $\|\cdot\|$ denotes the spectral 
norm. We then apply this to bound the 
Kullback-Leibler divergence 
\be
\KL{\mc N_V}{\mc N_N} &= 
\frac{1}{2} \bigg(\tr{ S_N^{-1} S_V} - p + \Log{\frac{|S_N|}{|S_V|}} + Q 
\bigg),
\ee
with $\mc N_V = \mc N(\cdot;S_V W'\kappa,S_V)$, $\mc N_N = \mc N(\cdot;S_N 
W'\kappa,S_N)$, and 
\be
Q = (S_N W'\kappa - S_V W' \kappa)' S_N^{-1} (S_N W'\kappa - S_V W'\kappa),
\ee
and use Pinsker's inequality to obtain a total variation bound. 
We will choose $\delta$ as a function of $\epsilon$ and $x$ 
to obtain $\| \P(x,\cdot) - \P_\epsilon(x,\cdot) \| < \epsilon$; thus, the 
supremum is controlled by adaptive choice of $\delta$. When this requires 
$m>N$, put $V = \{1,\ldots,N\}$ and obtain the exact kernel. 

We proceed in four steps:
\begin{enumerate}
 \item Showing we can control $\|\Sigma_V - \Sigma_N\|$ with high probability;
 \item Obtaining bounds on the eigenvalues of $\Sigma_V$ and $\Sigma_N$ when 
$\|\Sigma_V - \Sigma_N\|<\delta $;
 \item Using (a) and (b) to control the KL; and
 \item Showing how to choose $\delta$ as a function of $\beta$ to achieve 
uniform control of $\|\P(x,\cdot) - \P_\epsilon(x,\cdot) \|_{TV}$.
\end{enumerate}

\subsection*{Part (a): Control of $\| \Sigma_V - \Sigma_N \|$}
Suppose we take a minibatch of \emph{random} size $m$, with $m \sim 
\Binom(N,q)$. 
We can represent the matrix $W' \Omega W$ as 
\be
W' \Omega W = \sum_{i=1}^n \omega_i W_i W_i' \equiv \sum_{i=1}^n A_i
\ee
Define
\be
M_N \equiv \frac1m \sum_{i=1}^N \xi_i A_i - \frac1N \sum_i A_i = \sum_i 
\left(\frac{\xi_i}{m} - \frac1n \right) A_i 
\ee
and observe
\be
\E \left(\frac{\xi_i}{m} - \frac1N \right) A_i &=  \left(\frac{\E \xi_i}{m} - 
\frac1N \right) \E A_i \\
&=  \left(\frac{q}{m} - \frac1N \right) \E A_i = \left(\frac{m/N}{m} - 
\frac1N \right) \E A_i = 0.
\ee
Define
\be
\sigma^2 &\equiv \left\| \sum_{i=1}^N \E \left(\frac{\xi_i}{m} - \frac1N 
\right)^2 A_i^2 \right\| =  \left\| \sum_{i=1}^N \left( \frac{q}{m^2} -2 
\frac{q}{mN} + \frac1{N^2} \right)  \E A_i^2 \right\| \\
&=  \left\| \sum_{i=1}^N \left( \frac{1}{Nm}  - \frac1{N^2} \right)  \E A_i^2 
\right\| = \left( \frac{1}{Nm}  - \frac1{N^2} \right) \left\| \sum_{i=1}^N  \E 
A_i^2 \right\| 
\ee
and then observe that with $W_i \beta = \theta_i$
\be
\| A_i \| &= \| \omega_i W_i \|_2^2 = \omega_i^2 \|W_i \|_2^2 \\
 \left\| \sum_i \E A_i^2  \right\| &= \left\| \sum_i \E \omega_i^2 \E 
(W_iW_i')^2 \right\| \\
 &\le \sum_i \E \omega_i^2 \|W_i W_i'\|^2 = \sum_i \E \omega_i^2 \|W_i \|^4_2 
\\
 &= \sum_i \left( \frac{1}{4 \theta_i^3} (\sinh(\theta_i) - \theta_i) 
\sech^2\left( \frac{\theta_i}{2} \right) + \frac{1}{4 (\theta_i)^2} 
\tanh^2\left( \frac{\theta_i}{2} \right)  \right) \|W_i \|^4_2, \\
&= \sum_i \left( \frac{1}{8 \theta_i^3} \sech^2\left( \frac{\theta_i}2 \right) 
\left( 2 \sinh(\theta_i) + \theta_i (\cosh(\theta_i) - 3) \right) \right) \|W_i 
\|^4_2, \\
\ee
giving an explicit expression for $\sigma^2$. The expectation of $\omega_i^2$ 
is available from the Laplace transform, see \cite{polson2013bayesian}. 
Putting 
\be \label{eq:ConcCons}
K &= \max_i \omega_i , \quad C = \max_i \|W_i \|^4_2  \\
L &= \left( \frac{1}{Nm}  - \frac1{N^2} \right) C K
\ee
we obtain from the matrix Bernstein inequality \cite[Theorem 1.4]{tropp2012user}
\be \label{eq:Conc}
\Pp [ \lambda_{\max}(M_N) \ge t  ] \le 2 p e^{-\frac{t^2/2}{\sigma^2 + Lt/3}}.
\ee
Using the relation $\|M \|^2_F \le p \lambda_{\max}(M)^2$, we obtain a high 
probability bound on $\|M\|^2_F$.

\subsection*{Part (b) : Control of eigenvalues of $\Sigma_N$ and $\Sigma_V$}
If $\|\Sigma_V - \Sigma_N\| < \delta$ then 
\be
\|\Sigma_V + B^{-1}/N - (\Sigma_N + B^{-1}/N) \| = \|\Sigma_V-\Sigma_N\| 
\le \delta.
\ee
Now, use $\|\Sigma_V - \Sigma_N\|_F^2 \le p \|\Sigma_V-\Sigma_N\|^2$, and 
that 
$\Sigma_V, \Sigma_N$ are Hermitian, and apply the Hoffman-Weilandt inequality 
\cite{bhatia2013matrix,hoffman1953variation,tao2015eigenvalues}, which 
ensures existence of a permutation $\rho$ of 
the eigenvalues of $\Sigma_V$ such that
\be
 \sum_{j=1}^p (\lambda_{\rho(j)}(\Sigma_V) - \lambda_j(\Sigma_N))^2 < 
\|\Sigma_V 
- \Sigma_N\|_F^2 \le (\sqrt{p} \|\Sigma_V - \Sigma_N\|)^2 \le p \delta^2,
\ee
where $\lambda_j(\Sigma)$ is the $j$th eigenvalue of the matrix $\Sigma$. So 
there exists a $j$ such that 
\be \label{eq:maxeigdist}
(\lambda_{\max}(\Sigma_V) - \lambda_j(\Sigma_N))^2 <  p \delta^2,
\ee
where $\lambda_{\max}(\Sigma_V)$ is the largest eigenvalue of $\Sigma_V$. This 
implies that 
\be \label{eq:maxeigub}
\lambda_{\max}(\Sigma_V) < \lambda_{\max} (\Sigma_N) + \sqrt{p} \delta. 
\ee
This is immediate if $j=1$ in (\ref{eq:maxeigdist}). If $j>1$ in 
(\ref{eq:maxeigdist}), then we must have (\ref{eq:maxeigub}), since otherwise 
\be
(\lambda_{\max}(\Sigma_V) - \lambda_j(\Sigma_N))^2 \ge  
(\lambda_{\max}(\Sigma_N) + 
\sqrt{p} \delta - \lambda_j(\Sigma_N))^2 \ge p \delta^2.
\ee
Furthermore, there exists a $k$ for which
\be
(\lambda_{\min}(\Sigma_V) - \lambda_k(\Sigma_N))^2 <  p \delta^2, 
\ee
with $\lambda_{\min}(\Sigma_V)$ the smallest eigenvalue of $\Sigma_V$, implying
\be
\lambda_{\min}(\Sigma_V) > \lambda_{\min}(\Sigma_N) - \sqrt{p} \delta
\ee
by analogous argument. So if 
\be
\delta < \frac{1}{\sqrt{p}} \frac{\lambda_{\min}(\Sigma_N)}{2}, 
\ee
we have 
\be
\lambda_{\min}(\Sigma_V) > \frac{\lambda_{\min}(\Sigma_N)}{2} \equiv 
\ell_{\min}(\beta)
\ee
ensuring the smallest eigenvalue of $\Sigma_N$ is bounded away from zero, and 
\be
\lambda_{\max}(\Sigma_V) < \lambda_{\max}(\Sigma_N) + 
\frac{\lambda_{\min}(\Sigma_N)}{2} \equiv \ell_{\max}(\beta)
\ee
So with $B = \eta I_p$,
\be
\lambda_{\min,N} \equiv \lambda_{\min}(N S_N(\beta)) 
&\ge \frac{1}{\ell_{\max}(\beta) + (N\eta)^{-1}}, \\
\lambda_{\max,N} \equiv \lambda_{\max}(N S_N(\beta)) &\le 
\frac{1}{2\ell_{\min}(\beta) + (N 
\eta)^{-1}}, \\
\lambda_{\min,V} \equiv \lambda_{\min}(N S_V(\beta)) &\ge \frac{1}{\ell_{\max}(\beta) + (N 
\eta)^{-1}}, \\ 
\lambda_{\max,V} \equiv \lambda_{\max}(N S_V(\beta)) &\le \frac{1}{\ell_{\min}(\beta) + (N 
\eta)^{-1}}.
\ee

\subsection*{Part (c): Control of KL Divergence}
Now we show control of $Q$, assuming that $\|\Sigma-\Sigma_N\| \le \delta$. We 
suppress dependence on $\beta$ in the following.
\be
Q = &(S_V W'\kappa-S_N W'\kappa )' S_N^{-1}  ( S_V W'\kappa-S_N W'\kappa ) \\
= & \frac{1}{N} (NS_V W'\kappa-NS_N W'\kappa )' (NS_N)^{-1}  ( NS_V W'\kappa-N 
S_N W'\kappa ) \\
&\le \frac{1}{N \lambda_{\min,N}} \norm{N S_V W'\kappa-N S_N W'\kappa}^2 \\
&= \frac{1}{N \lambda_{\min,N}} \norm{N S_V\left[
W'\kappa- N^{-1} S_V^{-1} N S_N W'\kappa \right]}^2 \\
&\le \frac{\lambda_{\max,V}^2}{N \lambda_{\min,N}}~\norm{\left[I-N^{-1} S_V^{-1} 
N S_N \right] W'\kappa}^2 \\
&\le \frac{\lambda_{\max,V}^2}{N \lambda_{\min,N}}~\norm{I-N^{-1} S_V^{-1} N S_N 
}^2 \norm{W'\kappa}^2 \\
&\le \frac{\lambda_{\max,V}^2}{N \lambda_{\min,N}}~\delta^2 
\lambda_{\max,N}^2  \norm{W'\kappa}^2  \\
&\le p \frac{\lambda_{\max,V}^2 \delta^2  
\lambda_{\max,N}^2}{\lambda_{\min,N}}
\ee
where various steps used Cauchy-Schwartz, assume $W$ is standardized to unit 
variance, $\kappa_i \in \{-1/2,1/2\}$, $\|\Sigma_V -\Sigma_N\| < \delta$, and 
\be
&\norm{I-N^{-1} S_V^{-1} N S_N} \le \norm{(N S_N)^{-1} - (N S_V)^{-1}}~\norm{N S_N}.
\ee 

To bound the other terms in the KL, first note that 
\be
\tr{ S_N^{-1} S_V} -p &= \tr{ (NS_N)^{-1} (NS_V) - I} \\
&= \tr{ ( \Sigma_N - \Sigma_V ) N S_V} 
\\
&\le \lambda_{\max}(\Sigma_N-\Sigma_V) \tr{N S_V} \le \frac{2p 
\delta}{\lambda_{\min,N}}.
\ee
Further, from Lemma B.2 in \cite{pati2014posterior}, since $S_V$ and $S_N$ are 
both positive definite for $|V|>p$, $\log |S_N S_V^{-1}| < \tr{S_N^{-1} S_V - 
I}$. So putting all of the bounds together,
\be
\KL{\mc N_V}{\mc N_N} 
&\le \frac{p\lambda_{\max,V}^2 \delta^2  
\lambda_{\max,N}^2}{\lambda_{\min,N}} + 
\frac{4p \delta}{\lambda_{\min,N}} \\
&\le \left( \frac{p\delta^2}{(\lambda^- 
+(N\eta)^{-1})^4 } + 
4p \delta \right) (\lambda^+ + 
\frac{\lambda^-}{2} + (N\eta)^{-1})
\ee
with $\lambda^- = \lambda_{\min}(\Sigma_N)$ and $\lambda^+ = 
\lambda_{\max}(\Sigma_N)$. 

\subsection*{Part (d): Uniform control of $\tv{\Krth}{\Krthe}$}
Putting
\be
\delta = \frac{(\lambda^-+(N\eta)^{-1})^2}{2p  (\lambda^+ + 
\frac{\lambda^-}{2} + (N\eta)^{-1})} \epsilon^2
\ee
gives
\be
\KL{\mc N_V}{\mc N_N} &\le 2 \epsilon^2 \\
\| \mc N_V - \mc N_N \|_{TV} &\le \epsilon.
\ee
The state dependence arises in two places. First, $\lambda^+$ and $\lambda^-$ 
are functions of $\omega$. Second, the probability \eqref{eq:Conc} of achieving 
this condition depends on the random constant $K$ in \eqref{eq:ConcCons}, which 
also depends on $\omega$.

\section{Proof of Remark \ref{rem:LogregCons}}
The result is a consequence of the following Lemma
\begin{lemma} \label{lem:pgexptail}
The $\textnormal{PG}(1,\alpha)$ distribution is a log-concave probability law.
\end{lemma}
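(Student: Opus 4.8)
The plan is to exploit the representation of the P\'olya--Gamma law as an infinite convolution of exponentials, together with the elementary fact that exponential tilting preserves log-concavity. Recall from \cite{polson2013bayesian} that a $\PG{1}{0}$ variable admits the series representation $\omega \stackrel{d}{=} \frac{1}{2\pi^2}\sum_{k=1}^{\infty} (k-1/2)^{-2} g_k$ with $g_k \iid \Gam{1}{1}$, and that for a general second parameter the density is the exponential tilt $p(\omega \mid 1,\alpha) \propto \exp(-\alpha^2 \omega/2)\, p(\omega \mid 1,0)$, with normalizing constant $\cosh(\alpha/2)$. These two facts reduce the lemma to (i) establishing log-concavity of $\PG{1}{0}$ and (ii) checking that tilting by $\exp(-\alpha^2\omega/2)$ preserves it.

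First I would treat $\PG{1}{0}$. Each summand $(2\pi^2)^{-1}(k-1/2)^{-2} g_k$ is a scaled $\Gam{1}{1}$ variable, hence exponential, and exponential densities are log-concave since their logarithm is affine on the support. The convolution of finitely many log-concave densities is again log-concave (a standard consequence of the Pr\'ekopa--Leindler inequality), so every partial sum $S_n = \frac{1}{2\pi^2}\sum_{k=1}^n (k-1/2)^{-2} g_k$ has a log-concave density. Since $\sum_k (k-1/2)^{-2} < \infty$, the series converges almost surely and $S_n$ converges in distribution to $\omega \sim \PG{1}{0}$, which is absolutely continuous with a smooth density. Because the class of log-concave probability measures on $\bb R$ is closed under weak limits, the limiting density of $\PG{1}{0}$ is log-concave.

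For step (ii), I would write the log-density of $\PG{1}{\alpha}$ as $\log p(\omega \mid 1,\alpha) = \text{const} - \tfrac{\alpha^2}{2}\omega + \log p(\omega \mid 1,0)$. The term $-\tfrac{\alpha^2}{2}\omega$ is affine and $\log p(\omega \mid 1,0)$ is concave by the previous paragraph, so their sum is concave and $\PG{1}{\alpha}$ is log-concave, as claimed.

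The main obstacle is the passage to the infinite-sum limit: one must argue that log-concavity survives the weak limit. This is handled by invoking closure of the log-concave class under weak convergence, but it requires confirming that the limit is genuinely absolutely continuous rather than degenerate, which is guaranteed by the known smoothness of the $\PG{1}{0}$ density. Everything else is a routine verification, so I expect the limiting argument to be the only delicate point.
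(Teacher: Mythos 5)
Your proof is correct, and it rests on the same two closure properties that the paper uses: finite convolutions of independent log-concave random variables are log-concave, and log-concavity is preserved under weak limits (Propositions 3.5 and 3.6 of \cite{saumard2014log}, which the paper cites for exactly these steps). The one genuine difference is how the second parameter $\alpha$ is handled. The paper invokes the series representation of $\PG{1}{\alpha}$ itself, namely $\omega \stackrel{d}{=} \sum_k g_k/(\pi^2(k-1/2)^2 + \alpha^2/2)$ with $g_k \sim \Exp(1)$, so that $\alpha$ is absorbed into the exponential rates and the limit argument delivers log-concavity of $\PG{1}{\alpha}$ in one pass. You instead prove log-concavity only for $\PG{1}{0}$ via its series representation, and then extend to general $\alpha$ through the tilting identity $p(\omega \mid 1,\alpha) \propto e^{-\alpha^2\omega/2}\, p(\omega \mid 1,0)$, observing that adding an affine term to a concave log-density preserves concavity. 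Both routes are valid and of essentially equal length; the paper's needs only one fact from \cite{polson2013bayesian} (the general-$\alpha$ series), while yours needs two (the $\alpha=0$ series and the tilting relation) but makes transparent that log-concavity for the whole family follows from the single case $\alpha = 0$. Your worry about the weak limit is also handled correctly: since the limit law is non-degenerate, Borell's dichotomy (a log-concave measure on $\bb R$ is either a point mass or absolutely continuous with log-concave density) means absolute continuity comes for free, so the appeal to known smoothness of the $\PG{1}{0}$ density, while sufficient, is not even needed.
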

\begin{proof}
If $\omega \sim \mbox{PG}(1,\alpha)$, then it is equal in distribution to the 
infinite sum of Exponentials 
\begin{align*}
\omega \sim \sum_{k=0}^{\infty} \varphi_k,\quad 
\varphi_k = \frac{g_k}{\pi^2 (k-1/2)^2 + \alpha^2/2},
\end{align*}
where $g_k \sim \mbox{Exp}(1)$, $\varphi_k \sim \mbox{Exp}(\pi^2 (k-1/2)^2 + 
\alpha^2/2)$, 
and $\varphi_k$ has a log-concave probability distribution since 
$\mbox{Exp}(\lambda)$ is log-concave for all finite $\lambda$ (see e.g. 
\cite{bagnoli2005log}).
Consider the sequence of random variables
\begin{align*}
\omega_n \sim \sum_{k=0}^{n} \frac{g_k}{\pi^2 (k-1/2)^2 + \alpha^2/2} = 
\sum_{k=0}^{\infty} \varphi_k
\end{align*}
for $n=0,\ldots,\infty$. For any finite $n$, $\omega_n$ has a log-concave 
distribution since the sum of independent random variables having log-concave 
distributions is log-concave (see Proposition 3.5 in \cite{saumard2014log}).  
As 
$\omega_n \stackrel{D}{\to} \omega$ (indicating convergence in distribution), 
$\omega$ is log concave from Proposition 3.6 in \cite{saumard2014log}.
\end{proof}
Since $\omega_i$ has a density, it follows that this density is log-concave, 
and therefore so is its survival function. Therefore we can write the survival 
function as $\Pp[\omega_i > t] = e^{-V_i(t)}$ for some convex function $V_i$ and
\be
\Pp \left[ \bigvee_i \omega_i > t \right] &= 1-\Pp \left[ \bigvee_i \omega_i < 
t \right] = 1- \prod_i (1-e^{-V_i(t)}).
\ee
The other claims of Remark \ref{rem:LogregCons} are established in the proof of 
Theorem \ref{thm:logregerr2}.

\section{Proof of Theorem \ref{thm:gperror}}
The results in this section concern the model in \eqref{eq:npreg} with priors 
in \eqref{eq:gpprior}. The transition kernel $\mc P$ is induced by the
sampler defined in section \ref{sec:gp}, and the approximating kernel $\mc 
P_{\epsilon}$ substitutes $\tSigma=U_{\epsilon} \Lambda_{\epsilon} 
U_{\epsilon}'$ for $\Sigma$ where $U_{\epsilon}$ is $N \times k$, 
$\Lambda_{\epsilon}$ is $k \times k$, and $k \le N$.

The result uses the following Lemma.
\begin{lemma} \label{lem:mhapprox}
Suppose $\P$ is the transition kernel of a Metropolis-Hastings algorithm
with proposal kernel $Q(x,dy)$ and acceptance probability $\alpha(x,y)$ 
and $\P_\epsilon$ uses the same proposal kernel but has acceptance
probability $\alpha_\epsilon(x,y)$. If
\be \label{eq:SupAccRat}
\sup_{x,y \in \X \times \X} |\alpha(x,y) - \alpha_\epsilon(x,y)| < \frac{\epsilon}2
\ee
then
\be
\sup_{x \in \X} \| \P(x,\cdot)-\P_\epsilon(x,\cdot)\|_{TV} \le \epsilon.
\ee
\end{lemma}
\begin{proof}
We have
\be
\sup_{x \in \X} \sup_{|\phi|<1} &\int \phi(y) \alpha(x,y) Q(x,dy) + \int \phi(x) (1-\alpha(x,y)) Q(x,dy) \\
&- \int \phi(y) \alpha_\epsilon(x,y) Q(x,dy) + \int \phi(x) (1-\alpha_\epsilon(x,y)) Q(x,dy) \\
&\le \sup_{x \in \X} \int |\phi(y)-\phi(x)| |\alpha(x,y)-\alpha_\epsilon(x,y)| Q(x,dy) \\
&\le 2 \sup_{x,y \in \X \times \X} |\alpha(x,y) - \alpha_\epsilon(x,y) | \le 2 \frac{\epsilon}2 = \epsilon.
\ee
\end{proof}

\subsubsection{Main result: approximation error for GP MH steps}
We now show that for every $\epsilon > 0$, the kernel $\mc P_{\epsilon}$ that 
replaces $\Sigma$ with $\tSigma$, achieving $\norm{\Sigma -\tSigma} < \delta$, 
satisfies Assumption \ref{ass:UniformTV}.

Applying Lemma \ref{lem:mhapprox}, we need only control \eqref{eq:SupAccRat}
\be
D_{\epsilon}(x,y) &\equiv |\alpha_\epsilon(x,y)-\alpha(x,y)| \\
&= \left| \left(\frac{L_{\epsilon}(z \mid y) p(y) Q(y,x)  }{ L_{\epsilon}(z \mid x) p(x) Q(x,y)} 
\wedge 1 \right) - \left(\frac{L(z \mid y) p(y) Q(y,x)  }{ L(z \mid x) p(x) Q(x,y)} \wedge 
1\right) \right| \\
&= \left| \left(r_{\epsilon}(x,y) \wedge 1 \right) - 
\left(r(x,y) \wedge 1\right) \right|.
\ee
Initially focus on the case where both $r_{\epsilon}(x,y)$ and 
$r(x,y)$ are less than one, and set $M(x,y) = 
\frac{p(y) Q(y,x)}{ p(x) Q(x,y)}$. 
Then 
\be
D_{\epsilon}(x,y)  &=  
M(x,y) \left| \frac{L_{\epsilon}(z \mid y)  }{ 
L_{\epsilon}(z \mid x)} - \frac{L(z \mid y)  }{ L(z \mid x)} 
\right| \\
&= M(x,y) \bigg| \frac{  | \Psi_{\epsilon,y}|^{1/2} }{| \Psi_{\epsilon,x}|^{1/2}} e^{-z' (\Psi_{\epsilon,y}-\Psi_{\epsilon,x}) z/2} 
-\frac{  | \Psi_y|^{1/2} }{| \Psi_x|^{1/2}} e^{-z' (\Psi_y-\Psi_x) z/2} \bigg|.
\ee
Now use that $\Sigma_{\epsilon}$ is a rank $k$ partial eigendecomposition of 
$\Sigma$ satisfying $\|\Sigma_{\epsilon} - \Sigma\|_F < \delta$, implying the 
following
\be
\Psi_{\epsilon}^{-1} &= U (\tau^2 \Lambda_{\epsilon} + 
\sigma^2 I) U', \qquad \Psi^{-1} = U (\tau^2 \Lambda + \sigma^2 
I) U' \\
\lambda^{\epsilon}_i &= \lambda_i, i \le k, \qquad \lambda^{\epsilon}_i = 0, i 
> k, \qquad \lambda_i < \delta, i > k,
\ee
where $\Lambda_\epsilon = \mbox{diag}(\lambda_1,\ldots,\lambda_k,0,\ldots,0)$, 
and $\lambda_i$ is the $i$th eigenvalue of $\Sigma$. Now put $z_U = z' U$, with 
$i$th entry $z_{U,i}$, $\Phi_\epsilon = (\tau^2 \Lambda_{\epsilon} + 
\sigma^2 I)^{-1}$, $\Phi = (\tau^2 \Lambda + 
\sigma^2 I)^{-1}$, and obtain
\be
D_{\epsilon}(x,y) &= M(x,y) \bigg| e^{-\frac{1}{2} z_U (\Phi_{\epsilon,y}-\Phi_{\epsilon,x}) z_U' + \frac12 \log \frac{  | \Phi_{\epsilon,y}| }{| \Phi_{\epsilon,x}|}}  
-e^{-\frac{1}{2} z_U (\Phi_y-\Phi_x) z_U' + \frac12 \log \frac{  | \Phi_y| }{| \Phi_x|}} \bigg| \\
\ee
Define
\be
g(x,y,z,\Phi) = e^{-\frac{1}{2} z (\Phi_{y}-\Phi_{x}) z' + \frac12 \log \frac{  | \Phi_y| }{| \Phi_x|}}  
\ee
giving
\be
D_{\epsilon}(x,y) &= M(x,y) g(x,y,z_U^{(1:k)},\Phi^{(1:k)}) 
|g(x,y,z_U^{(k+1:N)},\Phi_\epsilon^{(k+1:N)}) - 
g(x,y,z_U^{(k+1:N)},\Phi^{(k+1:N)})| \\
&= M(x,y) g(x,y,z_U^{(1:k)},\Phi^{(1:k)}) \Delta(\delta,x,y)
\ee
where for a vector $z$, $z^{(j:k)}$ is the subvector consisting of the  
$j$th through $k$th elements of $z$, and for a matrix $\Psi$, $\Psi^{(j:k)}$ is 
the submatrix consisting of elements with both row and column indices in the 
range $j:k$. Simplifying gives
\be
\Delta(\delta,x,y) &= \bigg| e^{-\frac{1}{2} \sum_{i=k+1}^N 
z_{U,i}^2 \left[ \frac{1}{\sigma_y^2} - \frac{1}{\sigma_x^2} \right]  + 
\frac{1}{2} \sum_{i=k+1}^N \log\frac{\sigma_y^2}{\sigma^2_x} }   \\
&-e^{-\frac{1}{2} \sum_{i=k+1}^N z_{U,i}^2 \left[ \frac{1}{\tau_y^2 
\lambda_i + \sigma_y^2} - \frac{1}{\tau^2_x \lambda_i + \sigma^2_x} \right] 
+ \frac{1}{2} \sum_{i=k+1}^N \log\frac{\tau_y^2 \lambda_i + 
\sigma_y^2}{\tau^2_x 
\lambda_i + \sigma^2_x} }  \bigg| \\
&\le \bigg| e^{-\frac{N-k}{2} \left[ 
\frac{\sigma^2_x-\sigma_y^2}{\sigma_y^2 \sigma^2_x}  - 
\log\frac{\sigma_y^2}{\sigma^2_x} \right] } -e^{-\frac{N-k}{2} \left[ 
\frac{\tau_x^2 \delta + \sigma_x^2 - \tau_y^2 
\delta - \sigma_y^2}{(\tau_y^2 \delta + \sigma_y^2)(\tau^2_x \delta + 
\sigma^2_x)} - \log\frac{\tau_y^2 \delta + \sigma_y^2}{\tau^2_x \delta + 
\sigma^2_x} \right] }  \bigg| \\
&= \bigO{\delta}
\ee

Although the only case considered above was that where 
$r_{\epsilon}(x,y) < 1$, if $|r_{\epsilon}(x,y) - r(x,y)| < 
\frac{\epsilon}{2}$, then
\begin{align*}
| (1 \wedge r_{\epsilon}(x,y)) - (1 \wedge r_{\epsilon}(x,y)) | < 
\frac{\epsilon}{2}.
\end{align*}
concluding the proof.

\end{appendix}

\bibliographystyle{plainnat}
\bibliography{amcmc-AoS}

\begin{thebibliography}{34}
\providecommand{\natexlab}[1]{#1}
\providecommand{\url}[1]{\texttt{#1}}
\expandafter\ifx\csname urlstyle\endcsname\relax
  \providecommand{\doi}[1]{doi: #1}\else
  \providecommand{\doi}{doi: \begingroup \urlstyle{rm}\Url}\fi

\bibitem[Alquier et~al.(2014)Alquier, Friel, Everitt, and
  Boland]{alquier2014noisy}
Pierre Alquier, Nial Friel, Richard Everitt, and Aidan Boland.
\newblock Noisy {M}onte {C}arlo: {C}onvergence of {M}arkov chains with
  approximate transition kernels.
\newblock \emph{Statistics and Computing}, 25\penalty0 (1):\penalty0 1--19,
  2014.

\bibitem[Bagnoli and Bergstrom(2005)]{bagnoli2005log}
Mark Bagnoli and Ted Bergstrom.
\newblock Log-concave probability and its applications.
\newblock \emph{Economic Theory}, 26\penalty0 (2):\penalty0 445--469, 2005.

\bibitem[Banerjee et~al.(2013)Banerjee, Dunson, and
  Tokdar]{banerjee2012efficient}
Anjishnu Banerjee, David~B. Dunson, and Surya~T. Tokdar.
\newblock {Efficient Gaussian process regression for large datasets}.
\newblock \emph{Biometrika}, 100\penalty0 (1):\penalty0 75--89, 2013.

\bibitem[Banerjee et~al.(2008)Banerjee, Gelfand, Finley, and
  Sang]{banerjee2008gaussian}
Sudipto Banerjee, Alan~E. Gelfand, Andrew~O. Finley, and Huiyan Sang.
\newblock Gaussian predictive process models for large spatial data sets.
\newblock \emph{Journal of the Royal Statistical Society: Series B (Statistical
  Methodology)}, 70\penalty0 (4):\penalty0 825--848, 2008.

\bibitem[Bardenet et~al.(2014)Bardenet, Doucet, and
  Holmes]{bardenet2014towards}
R{\'e}mi Bardenet, Arnaud Doucet, and Chris Holmes.
\newblock Towards scaling up {M}arkov chain {M}onte {C}arlo: an adaptive
  subsampling approach.
\newblock In \emph{Proceedings of the 31st International Conference on Machine
  Learning (ICML-14)}, pages 405--413, 2014.

\bibitem[Bhatia(2013)]{bhatia2013matrix}
Rajendra Bhatia.
\newblock \emph{Matrix analysis}, volume 169.
\newblock Springer Science \& Business Media, 2013.

\bibitem[Bhattacharya and Dunson(2010)]{bhattacharya2010nonparametric}
Abhishek Bhattacharya and David~B. Dunson.
\newblock {Nonparametric Bayesian density estimation on manifolds with
  applications to planar shapes}.
\newblock \emph{Biometrika}, 102\penalty0 (2):\penalty0 851--865, 2010.

\bibitem[Choi and Hobert(2013)]{choi2013polya}
Hee~Min Choi and James~P Hobert.
\newblock {The P\'{o}lya-Gamma Gibbs sampler for Bayesian logistic regression
  is uniformly ergodic}.
\newblock \emph{Electronic Journal of Statistics}, 7:\penalty0 2054--2064,
  2013.

\bibitem[Ferr{\'e} et~al.(2013)Ferr{\'e}, Herv{\'e}, and
  Ledoux]{ferre2013regular}
D{\'e}borah Ferr{\'e}, Lo{\"\i}c Herv{\'e}, and James Ledoux.
\newblock Regular perturbation of {V}-geometrically ergodic {M}arkov chains.
\newblock \emph{Journal of Applied Probability}, 50\penalty0 (1):\penalty0
  184--194, 2013.

\bibitem[Finley et~al.(2009)Finley, Sang, Banerjee, and
  Gelfand]{finley2009improving}
Andrew~O. Finley, Huiyan Sang, Sudipto Banerjee, and Alan~E. Gelfand.
\newblock Improving the performance of predictive process modeling for large
  datasets.
\newblock \emph{Computational Statistics \& Data Analysis}, 53\penalty0
  (8):\penalty0 2873--2884, 2009.

\bibitem[Gamerman and Lopes(2006)]{gamerman2006markov}
Dani Gamerman and Hedibert~F. Lopes.
\newblock \emph{Markov chain Monte Carlo: stochastic simulation for Bayesian
  inference}.
\newblock CRC Press, 2 edition, 2006.

\bibitem[Guhaniyogi et~al.(2014)Guhaniyogi, Qamar, and
  Dunson]{guhaniyogi2014bayesian}
Rajarshi Guhaniyogi, Shaan Qamar, and David~B. Dunson.
\newblock Bayesian conditional density filtering.
\newblock \emph{arXiv preprint arXiv:1401.3632}, 2014.

\bibitem[Halko et~al.(2011)Halko, Martinsson, and Tropp]{halko2011finding}
Nathan Halko, Per-Gunnar Martinsson, and Joel~A. Tropp.
\newblock Finding structure with randomness: Probabilistic algorithms for
  constructing approximate matrix decompositions.
\newblock \emph{SIAM review}, 53\penalty0 (2):\penalty0 217--288, 2011.

\bibitem[Hoffman and Wielandt(1953)]{hoffman1953variation}
Alan~J. Hoffman and Helmut~W. Wielandt.
\newblock The variation of the spectrum of a normal matrix.
\newblock \emph{Duke Math. J}, 20\penalty0 (1):\penalty0 37--39, 1953.

\bibitem[Hughes and Haran(2013)]{hughes2013dimension}
John Hughes and Murali Haran.
\newblock Dimension reduction and alleviation of confounding for spatial
  generalized linear mixed models.
\newblock \emph{Journal of the Royal Statistical Society: Series B (Statistical
  Methodology)}, 75\penalty0 (1):\penalty0 139--159, 2013.

\bibitem[Johndrow and Mattingly(2017)]{johndrow2017coupling}
James~E. Johndrow and Jonathan~C. Mattingly.
\newblock Coupling and decoupling to bound an approximating {M}arkov chain.
\newblock \emph{arXiv preprint arXiv:1706.02040}, 2017.

\bibitem[Korattikara et~al.(2014)Korattikara, Chen, and
  Welling]{korattikara2014austerity}
Anoop Korattikara, Yutian Chen, and Max Welling.
\newblock Austerity in {MCMC} land: Cutting the {M}etropolis-{H}astings budget.
\newblock In \emph{Proceedings of the 31st International Conference on Machine
  Learning (ICML-14)}, pages 181--189, 2014.

\bibitem[Mitrophanov(2005)]{mitrophanov2005sensitivity}
Alexander~Y. Mitrophanov.
\newblock Sensitivity and convergence of uniformly ergodic {M}arkov chains.
\newblock \emph{Journal of Applied Probability}, 42\penalty0 (4):\penalty0
  1003--1014, 2005.

\bibitem[Nash et~al.(1994)Nash, Sellers, Talbot, Cawthorn, and
  Ford]{nash1994population}
Warwick~J. Nash, Tracy~L. Sellers, Simon~R. Talbot, Andrew~J. Cawthorn, and
  Wes~B. Ford.
\newblock The population biology of {A}balone ({H}aliotis species).
\newblock \emph{Blacklip Abalone (H. rubra) from the North Coast and Islands of
  Bass Strait. Sea Fisheries Division Technical Report}, 48, 1994.

\bibitem[O'Brien and Dunson(2004)]{o2004bayesian}
Sean~M. O'Brien and David~B. Dunson.
\newblock Bayesian multivariate logistic regression.
\newblock \emph{Biometrics}, 60\penalty0 (3):\penalty0 739--746, 2004.

\bibitem[Pati et~al.(2014)Pati, Bhattacharya, Pillai, and
  Dunson]{pati2014posterior}
Debdeep Pati, Anirban Bhattacharya, Natesh~S. Pillai, and David~B. Dunson.
\newblock Posterior contraction in sparse {B}ayesian factor models for massive
  covariance matrices.
\newblock \emph{The Annals of Statistics}, 42\penalty0 (3):\penalty0
  1102--1130, 2014.

\bibitem[Pillai and Smith(2015)]{pillai2014ergodicity}
Natesh~S. Pillai and Aaron Smith.
\newblock Ergodicity of approximate {MCMC} chains with applications to large
  data sets.
\newblock \emph{arXiv preprint arXiv:1405.0182v2}, 2015.

\bibitem[Polson et~al.(2013)Polson, Scott, and Windle]{polson2013bayesian}
Nicholas~G. Polson, James~G. Scott, and Jesse Windle.
\newblock {Bayesian inference for logistic models using P\'{o}lya--Gamma latent
  variables}.
\newblock \emph{Journal of the American Statistical Association}, 108\penalty0
  (504):\penalty0 1339--1349, 2013.

\bibitem[Quiroz et~al.(2014)Quiroz, Villani, and Kohn]{quiroz2014speeding}
Matias Quiroz, Mattias Villani, and Robert Kohn.
\newblock Speeding up {MCMC} by efficient data subsampling.
\newblock \emph{arXiv preprint arXiv:1404.4178}, 2014.

\bibitem[Ritter and Tanner(1992)]{ritter92}
Christian Ritter and Martin~A. Tanner.
\newblock Facilitating the {G}ibbs sampler: The {G}ibbs stopper and
  griddy-{G}ibbs sampler.
\newblock \emph{Journal of the American Statistical Association}, 87:\penalty0
  861--868, 1992.

\bibitem[Robert and Casella(2004)]{robert2004monte}
Christian~P. Robert and George Casella.
\newblock \emph{Monte {C}arlo statistical methods}.
\newblock Springer, 2 edition, 2004.

\bibitem[Roberts et~al.(1998)Roberts, Rosenthal, and
  Schwartz]{roberts1998convergence}
Gareth~O. Roberts, Jeffrey~S. Rosenthal, and Peter~O. Schwartz.
\newblock Convergence properties of perturbed {M}arkov chains.
\newblock \emph{Journal of Applied Probability}, 35\penalty0 (1):\penalty0
  1--11, 1998.

\bibitem[Rudolf and Schweizer(2017)]{rudolf2015perturbation}
Daniel Rudolf and Nikolaus Schweizer.
\newblock Perturbation theory for markov chains via wasserstein distance.
\newblock \emph{Bernoulli}, 2017.

\bibitem[Sariyar et~al.(2011)Sariyar, Borg, and
  Pommerening]{sariyar2011controlling}
Murat Sariyar, Andreas Borg, and Klaus Pommerening.
\newblock Controlling false match rates in record linkage using extreme value
  theory.
\newblock \emph{Journal of Biomedical Informatics}, 44\penalty0 (4):\penalty0
  648--654, 2011.

\bibitem[Saumard and Wellner(2014)]{saumard2014log}
Adrien Saumard and Jon~A. Wellner.
\newblock {Log-concavity and strong log-concavity: A review}.
\newblock \emph{Statistics Surveys}, 8:\penalty0 45--114, 2014.

\bibitem[Smola and Bartlett(2001)]{smola2001sparse}
Alex~J. Smola and Peter Bartlett.
\newblock Sparse greedy {G}aussian process regression.
\newblock In \emph{Advances in Neural Information Processing Systems 13}, 2001.

\bibitem[Tao(2015)]{tao2015eigenvalues}
Terence Tao.
\newblock 254a, notes 3a: Eigenvalues and sums of {H}ermitian matrices.
\newblock
  \url{https://terrytao.wordpress.com/2010/01/12/254a-notes-3a-eigenvalues-and-sums-of-hermitian-matrices/\#more-3341},
  2015.
\newblock Accessed: 2015-11-02.

\bibitem[Tropp(2012)]{tropp2012user}
Joel~A. Tropp.
\newblock User-friendly tail bounds for sums of random matrices.
\newblock \emph{Foundations of Computational Mathematics}, 12\penalty0
  (4):\penalty0 389--434, 2012.

\bibitem[Waugh(1995)]{waugh1995extending}
Sam Waugh.
\newblock Extending and benchmarking cascade-correlation.
\newblock \emph{Dept of Computer Science, University of Tasmania, Ph. D.
  Dissertation}, 1995.

\end{thebibliography}

\end{document}